 \g@addto@macro \normalsize{\setlength\abovedisplayskip{5pt plus 0pt minus 0pt} \setlength\belowdisplayskip{5pt plus 0pt minus 0pt}}
\newtheorem{thm}{Theorem}[section]
\newtheorem{claim}[thm]{Claim}
\newtheorem{cor}[thm]{Corollary}
\def\R{\mathbb R}
\def\Z{\mathbb Z}
\DeclareFontShape{T1}{cmr}{m}{scit}{<->ssub*cmr/m/sc}{}
\newcommand*{\ExDirSPEx}{{\scshape Exact-Si\-mul\-ta\-ne\-ous-Dipaths-Ex\-is\-tence}\@\xspace}
\newcommand*{\ExDirSPMinCost}{{\scshape Exact-Si\-mul\-ta\-ne\-ous-Dipaths}\@\xspace}
\newcommand*{\ExUndSPEx}{{\scshape Exact-Si\-mul\-ta\-ne\-ous-Paths-Ex\-is\-tence}\@\xspace}
\newcommand*{\ExUndSPMinCost}{{\scshape Exact-Si\-mul\-ta\-ne\-ous-Paths}\@\xspace}
\newcommand*{\SupDirSPMinCost}{{\scshape Superset-Si\-mul\-ta\-ne\-ous-Di\-paths}\@\xspace}
\newcommand*{\SupUndSPMinCost}{{\scshape Superset-Si\-mul\-ta\-ne\-ous-Paths}\@\xspace}
\tikzset{
    my box/.style = {
        , line cap = round
        , line join = round
    }
}
\newcommand{\highlight}[3]{
  \path [my box, line width = #1, draw = #2, transparency group, opacity=1] #3;
}
\title{Simultaneous Network Design with Restricted Link Usage}
\author{Naonori Kakimura\thanks{Keio University, E-mail: \texttt{kakimura@math.keio.ac.jp}}
\and P\'eter Madarasi\thanks{HUN-REN R\'{e}nyi Alfr\'{e}d R\'{e}nyi Institute of Mathematics, and Department of Operations Research, E{\"o}tv{\"o}s Lor{\'a}nd University, Budapest, Hungary. E-mail: \texttt{madarasip@staff.elte.hu}}
\and Jannik Matuschke\thanks{Research Center for Operations Management, KU Leuven, Belgium. E-mail: \texttt{jannik.matuschke@kuleuven.be}}
\and Kitti Varga\thanks{HUN-REN--ELTE Egerv{\'a}ry Research Group on Combinatorial Optimization, and Department of Computer Science and Information Theory, Faculty of Electrical Engineering and Informatics, Budapest University of Technology and Economics, Budapest, Hungary. E-mail: \texttt{vkitti@cs.bme.hu}}}
\date{}
\begin{document}

\setlength\abovedisplayskip{1pt plus 2pt minus 3pt}
\setlength\belowdisplayskip{3pt plus 2pt minus 3pt}

\maketitle

\begin{abstract}
  Given a digraph with two terminal vertices $s$ and $t$ as well as a conservative cost function and several not necessarily disjoint color classes on its arc set, our goal is to find a minimum-cost subset of the arcs such that its intersection with each color class contains an $s$-$t$ dipath.
  Problems of this type arise naturally in multi-commodity network design settings where each commodity is restricted to use links of its own color only.

  We study several variants of the problem, deriving strong hardness results even for restricted cases, but we also identify cases that can be solved in polynomial time.
  The latter ones include the cases where the color classes form a laminar family, or where the underlying digraph is acyclic and the number of color classes is constant.
  We also present an FPT algorithm for the general case parameterized by the number of multi-colored arcs.

\medskip
\noindent\textbf{Keywords:} simultaneous paths, shortest path, NP-hardness, inapproximability, fixed-parameter algorithm, directed acyclic graphs
\end{abstract}
\bigskip

\section{Introduction}

The study of connectivity problems in graphs is a cornerstone of combinatorial optimization, supporting decision makers in the cost-efficient design of network infrastructures in diverse areas such as transportation, communication, or energy transmission.
In the fundamental shortest path problem, two terminal vertices have to be connected by a path of minimum cost.
In typical applications, it is natural to consider different commodities using the resulting network infrastructure, with each link of the network suited to support only some subset of commodities.
For example, a traffic network may be used by different vehicle types such as cars, buses, trucks, or bicycles, but not every link in the network may be suitable for use by each vehicle type.

In this paper, we study a new, fundamental network optimization problem that captures this type of usage restrictions.
In this simultaneous path problem, we are given a directed graph~(\emph{digraph} for short) with two special vertices $s$ and $t$ and a conservative cost function (i.e., no negative-cost directed cycles are allowed) on the arc set, as well as $k$ not necessarily disjoint subsets of the arcs --- called \emph{color classes} --- covering the arc set.
Our goal is to find a minimum-cost subset of the arcs such that its intersection with each of the $k$ color classes contains an $s$-$t$ directed path~(\emph{dipath} for short).
In a stricter variant of the problem, we require the intersection of the arc set with each color class to be an $s$-$t$ dipath rather than containing one.
We refer to these two variants as the ``superset'' and the ``exact'' variant, respectively.
Allowing for either directed or undirected graph, we obtain four variants of the problem, which we formally state below.

\medskip
\noindent {\bf \ExDirSPMinCost} \\
\textit{Input:} a positive integer $k$, a digraph~$D=(V,A)$ with $A = A_1 \cup \cdots \cup A_k$, two distinct vertices $s, t \in V$, and a conservative cost function $c \colon A \to \mathbb{R}$.\\
\textit{Problem:} find $A' \subseteq A$ of minimum cost such that $A' \cap A_i$ is an $s$-$t$ dipath for all $i \in \{ 1, \ldots, k \}$.
\medskip

\noindent {\bf \ExUndSPMinCost} \\
\textit{Input:} a positive integer $k$, a graph~$G=(V,E)$ with $E = E_1 \cup \cdots \cup E_k$, two distinct vertices $s, t \in V$, and a nonnegative cost function $c \colon A \to \mathbb{R}_+$.\\
\textit{Problem:} find $E' \subseteq E$ of minimum cost such that $E' \cap E_i$ is an $s$-$t$ path for all $i \in \{ 1, \ldots, k \}$.
\medskip

\noindent {\bf \SupDirSPMinCost} \\
\textit{Input:} a positive integer $k$, a digraph~$D=(V,A)$ with $A = A_1 \cup \cdots \cup A_k$, two distinct vertices $s, t \in V$, and a conservative cost function $c \colon A \to \mathbb{R}$.\\
\textit{Problem:} find $A' \subseteq A$ of minimum cost such that $A' \cap A_i$ contains an $s$-$t$ dipath for all $i \in \{ 1, \ldots, k \}$.
\medskip

\noindent {\bf \SupUndSPMinCost} \\
\textit{Input:} a positive integer $k$, a graph~$G=(V,E)$ with $E = E_1 \cup \cdots \cup E_k$, two distinct vertices $s, t \in V$, and a nonnegative cost function $c \colon A \to \mathbb{R}_+$.\\
\textit{Problem:} find $E' \subseteq E$ of minimum cost such that $E' \cap E_i$ contains an $s$-$t$ path for all $i \in \{ 1, \ldots, k \}$.
\bigskip

For some examples on the above problems and their optimal solutions, see Figure~\ref{fig:example}.

\begin{figure}[!ht]
\centering
\begin{tikzpicture}[scale=.76]
 \tikzstyle{vertex}=[draw,circle,fill,minimum size=6,inner sep=0]

 \tikzset{edge_blue/.style={postaction={decorate, decoration={markings, mark=between positions 2pt and 1-2pt step 11pt with {\draw[blue!50!black,thin,fill=blue!50] ($(90:0.08) - (0,0.02)$) -- ($(210:0.08) - (0,0.02)$) -- ($(330:0.08) - (0,0.02)$) -- ($(90:0.08) - (0,0.02)$);}}}, postaction={decorate, decoration={markings, mark=between positions 7.5pt and 1-2pt step 11pt with {\draw[blue!50!black,thin,fill=blue!50] ($(270:0.08) + (0,0.02)$) -- ($(30:0.08) + (0,0.02)$) -- ($(150:0.08) + (0,0.02)$) -- ($(270:0.08) + (0,0.02)$);}}}}}

 \tikzset{edge_red/.style={postaction={decorate, decoration={markings, mark=between positions 3pt and 1-3pt step 6pt with {\draw[red!50!black,thin,fill=red!50] (30:0.08) -- (150:0.08) -- (210:0.08) -- (330:0.08) -- (30:0.08);}}}}}

 \tikzset{edge_red_and_blue/.style={postaction={decorate, decoration={markings, mark=between positions 3pt and 1-3pt step 12pt with {\draw[red!50!black,thin,fill=red!50] (30:0.08) -- (150:0.08) -- (210:0.08) -- (330:0.08) -- (30:0.08);}}}, postaction={decorate, decoration={markings, mark=between positions 8.5pt and 1-3pt step 12pt with {\draw[blue!50!black,thin,fill=blue!50] ($(90:0.08) - (0,0.02)$) -- ($(210:0.08) - (0,0.02)$) -- ($(330:0.08) - (0,0.02)$) -- ($(90:0.08) - (0,0.02)$);}}}}}

 \node[vertex] (s) at (0,0) [label={[xshift=-5pt, yshift=-20pt]:$s$}] {};
 \node[vertex] (t) at (6 + 2*1.73205080757,0) [label={[xshift=5pt, yshift=-20pt]:$t$}] {};

 \node[vertex] (b1) at (2,0) {};
 \node[vertex] (b2) at (4,0) {};
 \node[vertex] (b3) at (4 + 1.73205080757,1) {};
 \node[vertex] (b4) at (6 + 1.73205080757,1) {};

 \node[vertex] (c1) at (4 + 1.73205080757,-1) {};
 \node[vertex] (c2) at (6 + 1.73205080757,-1) {};

 \node[vertex] (d1) at (4 + 1.73205080757,-1-1.1) {};

 \begin{pgfonlayer}{background}
 \begin{scope}[opacity=.8,transparency group]
  \highlight{10pt}{black!20}{(s.center) to [bend left=0] (b1.center) to [bend left=0] (b2.center) to [bend left=0] (c1.center) to [bend left=0] (c2.center) to [bend left=0] (t.center)}
  \highlight{10pt}{black!20}{(s.center) to [bend left=34] (b3.center) to [bend left=0] (b2.center)}
 \end{scope}
 \end{pgfonlayer}

 \path[edge_blue] (s) -- (b1);
 \path[edge_red_and_blue] (b1) -- (b2);
 \path[edge_red_and_blue] (b2) -- (b3);
 \path[edge_blue] (b3) -- (b4);
 \path[edge_blue] (b4) -- (t);
 \path[edge_red] (s) to [bend left=33] (b3);

 \path[edge_red_and_blue] (b2) -- (c1);
 \path[edge_red_and_blue] (c1) -- (c2);
 \path[edge_red_and_blue] (c2) -- (t);

 \path[edge_red] (b1) to [out=-90*2/3,in=180] (d1);
 \path[edge_red] (d1) to [out=0,in=-90+-90*1/3] (t);

 \begin{scope}[shift={(11.5,0)}]
 \node[vertex] (s) at (0,0) [label={[xshift=-5pt, yshift=-20pt]:$s$}] {};
 \node[vertex] (t) at (6 + 2*1.73205080757,0) [label={[xshift=5pt, yshift=-20pt]:$t$}] {};

 \node[vertex] (b1) at (2,0) {};
 \node[vertex] (b2) at (4,0) {};
 \node[vertex] (b3) at (4 + 1.73205080757,1) {};
 \node[vertex] (b4) at (6 + 1.73205080757,1) {};

 \node[vertex] (c1) at (4 + 1.73205080757,-1) {};
 \node[vertex] (c2) at (6 + 1.73205080757,-1) {};

 \node[vertex] (d1) at (4 + 1.73205080757,-1-1.1) {};

 \begin{pgfonlayer}{background}
 \begin{scope}[opacity=.8,transparency group]
  \highlight{10pt}{black!20}{(s.center) to [bend left=0] (b1.center) to [bend left=0] (b2.center) to [bend left=0] (b3.center) to [bend left=0] (b4.center) to [bend left=0] (t.center)}
  \highlight{10pt}{black!20}{(s.center) to [bend left=30] (b3.center)}
  \highlight{10pt}{black!20}{(b1.center) to [bend right=31] (d1.center) to [bend right=31] (t.center)}
 \end{scope}
 \end{pgfonlayer}

 \path[edge_blue] (s) -- (b1);
 \path[edge_red_and_blue] (b1) -- (b2);
 \path[edge_red_and_blue] (b2) -- (b3);
 \path[edge_blue] (b3) -- (b4);
 \path[edge_blue] (b4) -- (t);
 \path[edge_red] (s) to [bend left=30] (b3);

 \path[edge_red_and_blue] (b2) -- (c1);
 \path[edge_red_and_blue] (c1) -- (c2);
 \path[edge_red_and_blue] (c2) -- (t);

 \path[edge_red] (b1) to [bend right=30] (d1);
 \path[edge_red] (d1) to [bend right=30] (t);
 \end{scope}
\end{tikzpicture}
\caption{Examples for \SupUndSPMinCost and to \ExUndSPMinCost.
  The arcs marked with (red) rectangles and with (blue) triangles correspond to the arcs in $A_1\setminus A_2$ and $A_2\setminus A_1$, respectively.
  The arcs marked with alternating (red) rectangles and (blue) triangles are in $A_1\cap A_2$.
  Each arc has cost $1$.
  The highlighted arcs on the left form an optimal solution to \SupUndSPMinCost, while those on the right form an optimal solution to \ExUndSPMinCost.
  Note that \SupUndSPMinCost has no optimal solution consisting of a shortest red and a shortest blue path.}
\label{fig:example}
\end{figure}

Note that for $k=1$, each of the problems defined above reduces to the classical shortest path problem.
Furthermore, for $k>1$, the natural generalization of the above problems in which each color class has its own terminal pair to be connected can be easily reduced to the corresponding case with a single pair of terminals as follows.
Formally, consider the modifications of the above problems in which multiple sources $s_1, \ldots, s_k$ and multiple targets $t_1, \ldots, t_k$ are given and the task is to find (directed or undirected) $s_i$-$t_i$ paths of color $i$ satisfying the corresponding conditions.
These problems can be reduced to the original ones by adding two new vertices $s$ and $t$, and the arcs or edges $ss_i$ and $t_i t$ with color $i$ to the graph for all $i \in \{ 1, \ldots, k \}$.

\subsection{Previous work}

The above reductions also show that the ``superset'' variants generalize the so-called \textsc{SteinerForest} problem~\cite{Agrawal1995Steiner}, in which we look for a minimum-cost subset of arcs or edges connecting multiple distinct terminal pairs.
Here, the color classes are used to emulate different source--sink pairs as described above, but they do not restrict which arcs or edges can be used to connect distinct terminal pairs.
However, as we establish in this paper, the introduction of restricted link usage makes the problems significantly harder.
For example, while undirected \textsc{SteinerForest} allows for a constant-factor approximation~\cite{Goemans1995General} and both directed and undirected \textsc{SteinerForest} can be solved in polynomial time when the number of terminals is constant~\cite{Feldman2006Directed}, neither {\SupDirSPMinCost} nor {\SupUndSPMinCost} allow for constant-factor approximations in general, and both problems are APX-complete even when only two color classes are present.

Similar problems have been studied before, in which we want to find such a minimum-cost subset of some ground set (e.g., the edge set of a graph) whose certain restrictions satisfy some conditions.
Such a problem concerning $b$-matchings was proved to be APX-complete in general, and polynomial-time solvable in special cases~\cite{madarasi2023sap}.

Now we propose a natural question of this type concerning the independent sets of matroids.
Let $S_1,\ldots,S_k$ be finite, nonempty, not necessarily disjoint sets, and let us be given a matroid $M_i$ on the ground set $S_i$ for all $i\in\{1,\ldots,k\}$.
Our goal is to find a maximum-size subset $F$ of $S_1 \cup \dots \cup S_k$ such that $F\cap S_i$ is independent in $M_i$ for each $i\in\{1,\ldots,k\}$.
This problem is NP-hard for $k \ge 3$ because it includes the matroid intersection problem for 3 matroids.
However, the problem is polynomial-time solvable provided that $k=2$ and all matroids are representable over the same field by a straightforward reduction to the matroid matching problem~\cite{Jenkyns1974Matchoid, Lovasz1990Matroid}.

The framework, in which instead of restricting on a single color only, we restrict to all but one colors, is also called color-avoidance~\cite{Krause2016Hidden}.
The problem of finding minimum-cost color-avoiding connected spanning subgraphs and its natural matroidal generalization were shown to be NP-hard but to admit a polynomial-time 2-approximation algorithm~\cite{Pinter2024Color}.

\subsection{Our contributions}

In this paper, we study the variants of the simultaneous path problem introduced above; for a summary of our results, see Table~\ref{table:summary}.

\begin{table}[!ht]
\begin{center}
 \setlength{\tabcolsep}{5pt}
 \renewcommand{\arraystretch}{1.2}
 \captionsetup{type=table}
 \begin{tabular}{!{\vrule width 1.25pt} c !{\vrule width 1.25pt} c !{\vrule width 1.25pt} c !{\vrule width 1.25pt}}
   \specialrule{1.25pt}{0pt}{0pt}
   & \scshape{Exact-Simultaneous-} & \scshape{Superset-Simultaneous-} \\ \specialrule{1.25pt}{0pt}{0pt}
   \multirow{8}{*}{\parbox{60pt}{\centering {\scshape Dipaths} \\ feasibility}} & \multirow{5}{*}{\parbox{160pt}{\centering NP-complete \\ even for $k=2$ and $\ell = 1$ \\ (Thm.~\ref{thm:exact_directed_NPC}) \\ and even in DAGs \\ (Thm.~\ref{thm:exact_DAG_NPC})}} & \multirow{8}{*}{\parbox{160pt}{\centering in P \\ (trivial)}} \\
   & & \\
   & & \\
   & & \\
   & & \\ \cline{2-2}
   & \multirow{3}{*}{\parbox{150pt}{\centering FPT parameterized by $\ell$\\ in planar digraphs\\ (Cor.~\ref{cor:ExDirSPEx_planar})}} & \\
   & & \\
   & & \\ \specialrule{1.25pt}{0pt}{0pt}
   \multirow{12}{*}{\parbox{60pt}{\centering \scshape{Dipaths}}} & \multirow{5}{*}{\parbox{160pt}{\centering no constant approx.\\ even for $k=2$ and $\ell = 1$ \\
   (Thm.~\ref{thm:exact_directed_inapprox})}} & \multirow{3}{*}{\parbox{160pt}{\centering $\big( (1-\varepsilon) \ln k \big)$-inapprox.~$\forall \varepsilon > 0$ \\ even for $c \equiv 1$ in DAGs \\ (Thm.~\ref{thm:superset_DAG_APXC})}} \\
   & & \\
   & & \\ \cline{3-3}
   & & \multirow{3}{*}{\parbox{160pt}{\centering APX-complete \\ even for $k=2$ and $c \equiv 1$ \\ (Thm.~\ref{thm:superset_directed_APXC})}} \\
   & & \\ \cline{2-2}
   & \multirow{5}{*}{\parbox{160pt}{\centering in P for fixed $k$ in DAGs \\ (Thm.~\ref{thm:exact_dac_constantk})}} & \\ \cline{3-3}
   & & \multirow{2}{*}{\parbox{160pt}{\centering FPT parameterized by $\ell$ \\ (Thm.~\ref{thm:intersectionFPT_SupDir})}} \\
   & & \\ \cline{3-3}
   & & \multirow{2}{*}{\parbox{160pt}{\centering in P for fixed $k$ in DAGs \\ (Thm.~\ref{thm:sup_dac_constantk})}} \\
   & & \\ \cline{2-3}
   & \multicolumn{2}{c !{\vrule width 1.25pt}}{\multirow{2}{*}{\parbox{320pt}{\centering in P if the color classes are laminar \\ (Thm.~\ref{thm:laminarfamily})}}} \\
   & \multicolumn{2}{c !{\vrule width 1.25pt}}{} \\
   \specialrule{1.25pt}{0pt}{0pt}
   \multirow{5}{*}{\parbox{60pt}{\centering {\scshape Paths} \\ feasibility}} & \multirow{3}{*}{\parbox{160pt}{\centering NP-complete \\ even for $k=2$ \\ (Thm.~\ref{thm:exact_undirected_NPC})}} & \multirow{5}{*}{\parbox{160pt}{\centering in P \\ (trivial)}} \\
   & & \\
   & & \\ \cline{2-2}
   & \multirow{2}{*}{\parbox{160pt}{\centering FPT parameterized by $\ell$ \\ (Thm.~\ref{thm:ExUndSPEx_fpt})}} & \\
   & & \\ \specialrule{1.25pt}{0pt}{0pt}
   \multirow{10}{*}{\parbox{60pt}{\centering \scshape{Paths}}} & \multirow{8}{*}{\parbox{160pt}{\centering no constant approx.\\ even for $k=2$ \\
   (Thm.~\ref{thm:exact_undirected_inapprox})}} & \multirow{3}{*}{\parbox{160pt}{\centering $\big( (1-\varepsilon) \ln k \big)$-inapprox.~$\forall \varepsilon > 0$ \\ even for $c \equiv 1$ \\ (Thm.~\ref{thm:superset_undirected_inapprox})}} \\
   & & \\
   & & \\ \cline{3-3}
   & & \multirow{3}{*}{\parbox{160pt}{\centering APX-complete \\ even for $k=2$ and $c \equiv 1$ \\ (Thm.~\ref{thm:super_undirected_APXC})}} \\
   & & \\
   & & \\ \cline{3-3}
   & & \multirow{2}{*}{\parbox{160pt}{\centering FPT parameterized by $\ell$ \\ (Thm.~\ref{thm:intersectionFPT_SupDir})}} \\
   & & \\ \cline{2-3}
   & \multicolumn{2}{c !{\vrule width 1.25pt}}{\multirow{2}{*}{\parbox{320pt}{\centering in P if the color classes are laminar \\ (Thm.~\ref{thm:laminarfamily})}}} \\
   & \multicolumn{2}{c !{\vrule width 1.25pt}}{} \\
   \specialrule{1.25pt}{0pt}{0pt}
 \end{tabular}
 \captionof{table}{Summary of our results, where $\ell$ denotes the number of multi-colored arcs or edges.}
 \label{table:summary}
\end{center}
\end{table}

We derive strong hardness results even for restricted cases, in particular, we prove that all variants are intractable.
Specifically, the decision version of the ``exact'' variants, \ExDirSPMinCost and \ExUndSPMinCost, are both NP-complete even when the number $k$ of colors is $2$ and the cost function is uniform; see Theorems~\ref{thm:exact_directed_NPC} and~\ref{thm:exact_undirected_NPC}.
The optimization versions of these problems cannot be approximated within a constant factor unless $\text{P} = \text{NP}$; see Theorems~\ref{thm:exact_directed_inapprox}~and~\ref{thm:exact_undirected_inapprox}.
We also show that the \ExDirSPMinCost problem remains NP-complete even for directed acyclic graphs; see Theorem~\ref{thm:exact_DAG_NPC}.

In contrast, it is easy to decide the feasibility of the ``superset'' variants: they are feasible if and only if the whole arc or edge set forms a feasible solution, that is, the subgraph $D_i=(V,A_i)$ or $G_i=(V,E_i)$ contains a directed or undirected $s$-$t$ path for all $i \in \{ 1, \ldots, k \}$.
However, we show that \SupDirSPMinCost and \SupUndSPMinCost are both APX-complete even when the number $k$ of colors is 2; see Theorems~\ref{thm:superset_directed_APXC} and~\ref{thm:super_undirected_APXC}.
In addition, we show that, unless $\text{P} = \text{NP}$, the \SupDirSPMinCost problem in directed acyclic graphs is inapproximable within a factor of $(1-\varepsilon) \ln k$ for any constant $\varepsilon > 0$; see Theorem~\ref{thm:superset_DAG_APXC}.

On the positive side, one can solve \ExDirSPMinCost and \SupDirSPMinCost for directed acyclic graphs when the number of color classes is constant; see Theorems~\ref{thm:exact_dac_constantk} and~\ref{thm:sup_dac_constantk}.
Moreover, we present an FPT algorithm for \SupDirSPMinCost and \SupUndSPMinCost parameterized by the number of multi-colored arcs or edges; see Theorem~\ref{thm:intersectionFPT_SupDir}.
We also propose an FPT algorithm with the same parameter to decide the feasibility of \SupUndSPMinCost; see Theorem~\ref{thm:ExUndSPEx_fpt}.
To decide the feasibility of \SupDirSPMinCost in restricted cases, we give an FPT algorithm with the same parameter.
Finally, we observe that all variants can be solved in polynomial time when the color classes form a laminar family; see Theorem~\ref{thm:laminarfamily}.

\section{Hardness results}

In this section, we derive our hardness results.

\subsection{Directed variants}\label{sec:reductionFromSat}

We now prove that the following decision version of the {\ExDirSPMinCost} problem in which one is to decide whether a feasible solution exists is NP-complete.

\smallskip

\noindent {\bf \ExDirSPEx} \\
\textit{Input:} a positive integer $k$, a digraph~$D=(V,A)$ with $A = A_1 \cup \cdots \cup A_k$, and two distinct vertices $s, t \in V$.\\
\textit{Problem:} decide whether there exists $A' \subseteq A$ such that $A' \cap A_i$ is an $s$-$t$ dipath for all $i \in \{ 1, \ldots, k \}$.

\medskip

\begin{thm}\label{thm:exact_directed_NPC}
The {\ExDirSPEx} problem is NP-complete even when $k=2$ and $|A_1\cap A_2|=1$.
\end{thm}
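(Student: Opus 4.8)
I would prove NP-completeness by reduction from a satisfiability-type problem — most naturally \textsc{3-SAT} or \textsc{NAE-3-SAT} — exploiting the constraint $|A_1 \cap A_2| = 1$ to force a tight interaction between the two color classes. Membership in NP is immediate: a certificate is the set $A'$, and one checks in polynomial time that $A' \cap A_i$ is exactly an $s$-$t$ dipath for $i \in \{1,2\}$. The substance is the hardness direction.

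\textbf{Construction.} Given a \textsc{3-SAT} instance with variables $x_1, \dots, x_n$ and clauses $C_1, \dots, C_m$, the idea is to build a digraph with a single "bottleneck" arc $e^\ast$ that is the unique element of $A_1 \cap A_2$, and route the red ($A_1$) dipath through a variable-selection gadget while the blue ($A_2$) dipath runs through a clause-verification gadget, sharing $e^\ast$ so that both dipaths must agree on which endpoint region of $e^\ast$ they pass through. Concretely, I would have the red dipath from $s$ traverse a sequence of $n$ diamond gadgets, one per variable, where in the $i$-th diamond it chooses the "true" branch or the "false" branch for $x_i$; these branches share arcs with literal-subpaths that the blue dipath will need. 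The blue dipath, after crossing $e^\ast$, would traverse $m$ clause gadgets in series, where in the gadget for $C_j$ it must pick one of the three literal arcs appearing in $C_j$ — and that literal arc is available (i.e.\ lies in $A_2$ and forms part of a traversable subpath) only if the red dipath selected the consistent branch in the corresponding variable diamond. The single shared arc $e^\ast$ is placed so that both the red and blue dipaths are forced through it, which synchronizes the "variable assignment" halves; the overlap between a red branch-arc and a blue literal-arc is instead engineered through vertices that both dipaths must visit, together with the exactness requirement (each $A' \cap A_i$ is a \emph{dipath}, not just a subset containing one), so that a vertex used inconsistently would create a branching or a detached component and violate the "is an $s$-$t$ dipath" condition.

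\textbf{Correctness.} In the forward direction, a satisfying assignment yields a red dipath encoding it and, clause by clause, a blue dipath picking a satisfied literal in each clause; one checks that $A' \cap A_1$ and $A' \cap A_2$ are each exactly an $s$-$t$ dipath. In the reverse direction, given a feasible $A'$, the red dipath in $A' \cap A_1$ defines an assignment (which branch of each diamond it uses), and feasibility of the blue dipath forces that in every clause gadget some literal is "active", meaning it is satisfied — so the assignment satisfies all clauses. The exactness condition is what rules out cheating: if $A'$ contained extra arcs trying to make both a true-branch and a false-branch available for some variable, then $A' \cap A_1$ would fail to be a simple dipath.

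\textbf{Main obstacle.} The delicate part is honoring $|A_1 \cap A_2| = 1$ while still transmitting enough information between the red and blue routes to encode both variable choices and clause satisfaction. With only one shared arc, all the "communication" between the two dipaths beyond that single synchronization point has to happen through shared \emph{vertices} and be enforced by the exactness (degree/connectivity) constraints rather than by shared arcs — so the bulk of the work is designing the variable and clause gadgets, and carefully arguing that the exact-dipath requirement prevents any inconsistent routing. A secondary point to get right is ensuring the red and blue dipaths are both genuinely forced through $e^\ast$ (e.g.\ by making $e^\ast$ a cut arc separating $s$ from $t$ within each subgraph $D_i$), and that the two halves on either side of $e^\ast$ cannot be "short-circuited" using arcs of the wrong color.
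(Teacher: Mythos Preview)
Your proposal has a genuine gap: with $|A_1 \cap A_2| = 1$, the two dipath constraints essentially decouple, so a SAT-style reduction that relies on interaction between a red ``variable-setting'' path and a blue ``clause-checking'' path cannot work as you describe. Concretely, write $e^\ast$ for the unique shared arc and set $B_1 = A' \cap (A_1 \setminus A_2)$, $B_2 = A' \cap (A_2 \setminus A_1)$. Then $A' \cap A_1 = B_1 \cup (A' \cap \{e^\ast\})$ and $A' \cap A_2 = B_2 \cup (A' \cap \{e^\ast\})$, and since $B_1$ and $B_2$ are drawn from disjoint arc sets, once you fix whether $e^\ast \in A'$ the two dipath conditions are completely independent. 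In particular, ``communication through shared vertices'' does nothing: an arc of $A_2 \setminus A_1$ incident to a vertex $v$ is invisible to the constraint ``$A' \cap A_1$ is a dipath'', so your hope that inconsistent use of $v$ would create a branching in $A' \cap A_1$ is unfounded. A single shared arc carries only one bit of coupling; it cannot transmit an $n$-variable assignment from the red side to the blue side.

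The paper exploits exactly this decoupling by putting \emph{all} the hardness inside one color class. It reduces from \textsc{TwoDisjointDipaths}: take $A_1 = A$ (the original digraph on terminals $s_1,t_1,s_2,t_2$, assumed to contain $t_1 s_2$) and $A_2 = \{s_1 t_1,\, t_1 s_2,\, s_2 t_2\}$, with $s=s_1$, $t=t_2$. Then $A_1 \cap A_2 = \{t_1 s_2\}$, the set $A_2$ is itself the unique blue $s$-$t$ dipath (so $A' \supseteq A_2$ is forced), and hence the red dipath $A' \cap A_1$ must pass through the arc $t_1 s_2$. But an $s_1$-$t_2$ dipath in $A_1$ through $t_1 s_2$ is precisely a pair of internally vertex-disjoint $s_1$-$t_1$ and $s_2$-$t_2$ dipaths. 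So the single shared arc is used not to communicate between colors but to pin a waypoint inside $A_1$, and the NP-hardness of routing a dipath through a prescribed arc (equivalently, two vertex-disjoint dipaths) does the rest.
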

\begin{proof}
The problem is clearly in NP.
We give a reduction from the NP-complete \textsc{TwoDisjointDipaths} problem~\cite{Fortune1980Directed} to {\ExDirSPEx} for $k=2$.
The input of the \textsc{TwoDisjointDipaths} problem is an arbitrary digraph $D = (V,A)$ along with four distinct given vertices $s_1, t_1, s_2, t_2 \in V$, and our goal is to find an $s_i$-$t_i$ dipath for each $i\in\{1,2\}$ that are vertex-disjoint.

Without loss of generality, we assume that $t_1s_2 \in A$ and $s_1t_1, s_2t_2 \notin A$.
We construct an instance of the {\ExDirSPEx} problem for $k=2$ as follows.
Define $A_1 = A$ and $A_2 = \{ s_1t_1, t_1s_2, s_2t_2 \}$.
Consider the digraph $\widetilde{D} = (V, A_1 \cup A_2)$ with two specified vertices $s=s_1$ and $t=t_2$, which clearly has polynomial size in that of $D$.

We show below that $D$ contains an $s_i$-$t_i$ dipath for each $i\in\{1,2\}$ that are vertex-disjoint if and only if there exists $A' \subseteq A_1 \cup A_2$ such that both $A' \cap A_1$ and $A' \cap A_2$ are $s$-$t$ dipaths.

First, assume that $D$ contains an $s_i$-$t_i$ dipath for each $i\in\{1,2\}$ that are vertex-disjoint.
Then let $A'$ be the set consisting of the arcs of these dipaths and the arcs of $A_2$.
By the definition of $A'$, both $A' \cap A_1$ and $A' \cap A_2$ are $s$-$t$ dipaths in $\widetilde{D}$.
Conversely, assume that there exists $A' \subseteq A_1 \cup A_2$ such that both $A' \cap A_1$ and $A' \cap A_2$ are $s$-$t$ dipaths in $\widetilde{D}$.
Since $A_2$ is an $s$-$t$ dipath, $A_2 \subseteq A'$ must hold, and thus the arc $s_1t_2$ is contained in $A'$.
Hence, the dipath $A' \cap A_1$ must use the arc $s_1t_2$, which means that $(A'\cap A_1)\setminus \{s_1t_2\}$ is the disjoint union of an $s_1$-$t_1$ and an $s_2$-$t_2$ dipath.
\end{proof}

This immediately implies that the optimization counterpart cannot be approximated unless $\text{P} = \text{NP}$.

\begin{thm}\label{thm:exact_directed_inapprox}
There exists no polynomial-time $\alpha$-approximation algorithm for the \ExDirSPMinCost problem for any $\alpha \geq 1$ unless $\text{P} = \text{NP}$, even if $k=2$ and $|A_1\cap A_2|=1$.
\end{thm}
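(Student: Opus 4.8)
The plan is to leverage Theorem~\ref{thm:exact_directed_NPC} directly: the NP-completeness of \ExDirSPEx shows that even \emph{deciding} whether a feasible solution exists is hard, and any approximation algorithm for the optimization version must in particular detect infeasibility (or else produce a finite-cost output on an infeasible instance, which is a contradiction). So the reduction is essentially the identity on instances.

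First I would take an arbitrary instance of \ExDirSPEx with $k=2$ and $|A_1\cap A_2|=1$, as produced by the reduction in the proof of Theorem~\ref{thm:exact_directed_NPC}, and turn it into an instance of \ExDirSPMinCost by equipping every arc with cost $0$ (this is trivially conservative, and keeps $k=2$ and $|A_1\cap A_2|=1$). If the \ExDirSPEx instance is a yes-instance, then an optimal solution to \ExDirSPMinCost has cost $0$; if it is a no-instance, then there is no feasible solution at all. Now suppose there were a polynomial-time $\alpha$-approximation algorithm $\mathcal{A}$ for \ExDirSPMinCost for some $\alpha \ge 1$. Run $\mathcal{A}$ on the constructed instance. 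On a yes-instance the optimum is $0$, so $\mathcal{A}$ must return a feasible solution of cost at most $\alpha \cdot 0 = 0$; in particular it returns \emph{some} feasible solution. On a no-instance $\mathcal{A}$ cannot return a feasible solution, because none exists. Hence, by checking in polynomial time whether $\mathcal{A}$'s output is feasible (i.e.\ whether the output arc set intersected with each $A_i$ is an $s$-$t$ dipath), we could decide \ExDirSPEx in polynomial time, contradicting Theorem~\ref{thm:exact_directed_NPC} unless $\text{P}=\text{NP}$.

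The only subtlety worth addressing is that an approximation algorithm is, strictly speaking, only required to behave well on feasible instances; one should be explicit that on an infeasible instance it simply cannot output an arc set whose restriction to each $A_i$ is an $s$-$t$ dipath, since such a set does not exist. I would also note that one may equally well use any fixed positive uniform cost instead of $0$ (the optimum on a yes-instance is then a fixed finite value depending only on the instance size, still detecting infeasibility as "$+\infty$"), so the inapproximability holds for any $\alpha \ge 1$ regardless of how the cost function is normalized. There is no real obstacle here — the work was already done in Theorem~\ref{thm:exact_directed_NPC}; this theorem is just the standard observation that an exact feasibility dichotomy ("optimum finite" vs.\ "no feasible solution") rules out approximation of \emph{any} factor.
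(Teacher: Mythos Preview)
Your argument is correct and is the standard ``feasibility is NP-hard, hence no finite approximation ratio'' observation. The paper takes a slightly different route that sidesteps precisely the subtlety you flagged: rather than feeding possibly infeasible instances to the hypothetical approximation algorithm, it augments the \ExDirSPEx instance by adding two fresh parallel $s$-$t$ arcs $a_1$ and $a_2$, placing $a_i$ in $A_i$ only, and assigning cost $1$ to $a_1,a_2$ and cost $0$ to all original arcs. The resulting instance is \emph{always} feasible (via $\{a_1,a_2\}$), so the approximation guarantee applies unconditionally; one then checks that the optimum is $0$ if and only if the original \ExDirSPEx instance was a yes-instance, and any $\alpha$-approximation must return a cost-$0$ solution exactly in that case. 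Your version works once one is explicit that a polynomial-time algorithm must halt on every input and that feasibility of its output can be verified, while the paper's construction buys you freedom from any discussion of how an approximation algorithm behaves on infeasible inputs. Both preserve $k=2$ and $|A_1\cap A_2|=1$.
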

\begin{proof}
Consider an instance of the {\ExDirSPEx} problem consisting of a digraph $D = (V,A)$, where the arc set $A$ is the union of two subsets $A_1$ and $A_2$ with $|A_1 \cap A_2| = 1$, and where $s,t \in V$ are two distinguished vertices.
Add two parallel arcs $a_1$ and $a_2$ from $s$ to $t$, and let $A'_1 = A_1 \cup \{ a_1 \}$ and $A'_2 = A_2 \cup \{ a_2 \}$.
Let $D'$ denote the so-obtained digraph, which has clearly polynomial size.
Let $c(a)$ be $1$ if $a\in \{a_1,a_2\}$, and $0$ otherwise.

Note that there exists a feasible solution to \ExDirSPMinCost in $D'$.
Given an $\alpha$-approximation algorithm for \ExDirSPMinCost, where $\alpha \geq 1$, it returns a feasible solution of $c$-cost $0$ in $D'$ if and only if \ExDirSPEx is feasible for $D$.
By Theorem~\ref{thm:exact_directed_NPC}, this implies $\text{P} = \text{NP}$.
\end{proof}

In contrast to Theorem~\ref{thm:exact_directed_NPC} above, we can find a feasible solution to \SupDirSPMinCost in polynomial time by finding an $s$-$t$ dipath in each color class.
Moreover, we can get a polynomial-time $k$-approximation algorithm for this problem as follows.
Specifically, let $P_i$ be a shortest $s$-$t$ dipath in each color class $i \in \{ 1, \ldots, k \}$.
Then the union of all these dipaths forms a $k$-approximate solution.
Indeed, since the length of such a shortest dipath is a lower bound on the optimum value of \SupDirSPMinCost, the sum of the length of these shortest dipaths is at most $k$ times the optimum value of \SupDirSPMinCost.
Clearly, a two-vertex digraph, with $k+1$ parallel arcs $a_1, \ldots, a_{k+1}$ of cost 1, where $A_i = \{ a_i, a_{k+1} \}$ for each $i \in \{ 1, \ldots, k \}$ gives a tight example.
However, the following theorem shows that there exists no polynomial-time approximation scheme unless $\text{P} = \text{NP}$.

\begin{thm}\label{thm:superset_directed_APXC}
The \SupDirSPMinCost problem is APX-complete even for $c \equiv 1$ and $k=2$.
\end{thm}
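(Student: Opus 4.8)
The plan is to prove the two halves of APX-completeness separately. Membership in APX is immediate from the discussion preceding the statement: outputting the union of a shortest red $s$--$t$ dipath $P_1$ and a shortest blue $s$--$t$ dipath $P_2$ is a polynomial-time $2$-approximation, since $|P_1 \cup P_2| \le |P_1| + |P_2| \le 2\,\mathrm{OPT}$ and a constant ratio is all APX requires. The real content is therefore APX-hardness already for $k = 2$ and $c \equiv 1$, which I would establish by an $L$-reduction from \textsc{Minimum Vertex Cover} on graphs of bounded maximum degree (e.g.\ cubic graphs), a problem well known to be APX-hard.

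Given such an instance $G = (V, E)$, the aim is to build in polynomial time a digraph $D = (V', A)$ with $A = A_1 \cup A_2$, terminals $s, t$, and unit costs, so that $\mathrm{OPT}(D) = \beta + c_1\cdot\mathrm{OPT}_{\mathrm{VC}}(G)$ for fixed positive constants $c_1$ and $\beta = \beta(G) = O(|E|)$, together with a polynomial-time decoding of solutions of $D$ into vertex covers of $G$ matching this correspondence. The intended construction keeps the two color classes nearly disjoint. Along a ``red spine'' the red dipath decides, for each $v \in V$, between a cheap bypass arc (encoding $v \notin C$) and a longer detour through a $v$-gadget (encoding $v \in C$); this detour is assembled from a constant number of two-colored arcs $b_v^e$, one for each edge $e$ incident to $v$. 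Along a ``blue spine'' the blue dipath must, for each edge $e = \{u, v\}$, route an $e$-gadget either via $b_u^e$ or via $b_v^e$. It is the two-coloredness of the $b$-arcs that couples the dipaths: once $v \in C$ is chosen, all of $v$'s $b$-arcs are already paid for and can be reused for free by the blue dipath, whereas traversing the $e$-gadget of an edge with no selected endpoint would force the blue dipath onto a penalty detour so long that an optimal solution never leaves an edge uncovered. Hence in an optimal solution the selected set $C$ is a genuine vertex cover and the objective equals a fixed constant plus $c_1|C|$; bounded degree makes every gadget of size $O(1)$, so $D$ has polynomial size.

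Granting such a construction, the two $L$-reduction inequalities are routine. Since $\mathrm{OPT}(D) = \beta + c_1\,\mathrm{OPT}_{\mathrm{VC}}(G)$ and $\mathrm{OPT}_{\mathrm{VC}}(G) = \Omega(|E|) = \Omega(\beta)$ for bounded-degree $G$, we get $\mathrm{OPT}(D) \le \gamma\cdot\mathrm{OPT}_{\mathrm{VC}}(G)$ for a constant $\gamma$. For the second inequality, decode a feasible solution $A'$ to the set $C$ of vertices whose detour the red dipath used (defaulting to $V$ if $A'$ uses a penalty detour, which forces $c(A') > \mathrm{OPT}(D) + |V|$); after dropping superfluous arcs so that $A'$ becomes a canonical solution coming from a vertex cover, without increasing cost, one obtains $\bigl|\,|C| - \mathrm{OPT}_{\mathrm{VC}}(G)\,\bigr| \le \tfrac{1}{c_1}\,\bigl|\,c(A') - \mathrm{OPT}(D)\,\bigr|$. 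As \textsc{Minimum Vertex Cover} on cubic graphs has no PTAS unless $\mathrm{P} = \mathrm{NP}$, the same holds for \SupDirSPMinCost with $k = 2$ and $c \equiv 1$; combined with the $2$-approximation, this gives APX-completeness.

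The step I expect to be the main obstacle is the gadget construction itself, specifically enforcing the vertex-cover constraint using only plain $s$--$t$ dipaths. Because a dipath cannot revisit a vertex, a single $v$-gadget cannot be entered once per incident edge, which is exactly why the detour of $v$ must thread through a distinct two-colored arc for each incident edge and why the bounded-degree assumption enters. One must also argue that the blue dipath cannot cheat --- that it gains nothing by weaving through several $v$-gadgets, or by paying for an arc $b_v^e$ whose endpoint $v$ was not selected --- so that no feasible solution beats the canonical ones, and the arc-colorings together with the length of the penalty detours must be tuned so that the objective is exactly affine in the size of the encoded cover. The remaining verifications are bookkeeping.
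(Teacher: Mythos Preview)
Your high-level plan is viable and close in spirit to what the paper does, but with a different source problem. The paper reduces from MAX-2SAT3 rather than from bounded-degree Vertex Cover: its $A_1$-path threads through a true/false choice for each variable (two equal-length subpaths, so the choice itself costs nothing), and its $A_2$-path visits the clauses in order, detouring through one literal's two-colored arc per clause; the overhead over a fixed baseline is then exactly the number of clauses whose chosen literal was \emph{not} selected by the $A_1$-path, so minimizing cost is maximizing satisfied clauses. Your version replaces the symmetric true/false choice by an asymmetric in/out-of-cover choice, which forces you to tune the detour--bypass gap to a constant $c_1$; the cubic hypothesis is precisely what makes this possible (all vertex gadgets have the same size) and what gives $\mathrm{OPT}_{\mathrm{VC}}(G)=\Omega(|E|)$ for the first $L$-reduction inequality. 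Both constructions are the same skeleton: a ``choice'' path and a ``constraint'' path that save one arc of overlap whenever the constraint is satisfied by the choice.

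One piece of your sketch is confused, however. The ``penalty detour'' is unnecessary, and the sentence ``traversing the $e$-gadget of an edge with no selected endpoint would force the blue dipath onto a penalty detour'' is simply false as stated: the arcs $b_u^e$ and $b_v^e$ are two-colored, hence lie in $A_2$ regardless of what the red path does, so the blue path can always route through either one. The only effect of $u,v\notin C$ is that the chosen $b$-arc is not shared with the red path and therefore contributes one extra unit to $|A'|$. If you set $c_1=1$ (detour exactly one arc longer than the bypass; pad the bypass to achieve this since all degrees are $3$), a canonical solution has cost $\beta+|C|+(\text{number of uncovered edges})$, and adding an endpoint to cover a previously uncovered edge never increases this quantity. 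So optimal solutions already correspond to vertex covers, $\mathrm{OPT}(D)=\beta+\mathrm{OPT}_{\mathrm{VC}}(G)$, and the decoding inequality $|C'|-\mathrm{OPT}_{\mathrm{VC}}(G)\le c(A')-\mathrm{OPT}(D)$ follows directly, with no penalty mechanism needed. You should drop the penalty detour and tighten the accounting accordingly; once you do, the remaining ``cheating'' concerns you list (the blue path weaving through several vertex gadgets, etc.) are handled exactly as in the paper, by separating the two-colored $b$-arcs inside each vertex gadget with single-colored arcs so that the blue path has no way to chain them.
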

\begin{proof}
Since we proposed a polynomial-time $k$-approximation algorithm to \SupDirSPMinCost above, the problem is in APX.
To show that it is APX-hard, we reduce the problem MAX-2SAT3 to it.
The input of MAX-2SAT3 is a Boolean formula in conjunctive normal form where each clause has size (at most) 2 and each variable occurs at most 3 times, and our goal is to find an assignment which maximizes the number of satisfied clauses.
It was shown in~\cite{Berman1999Some} that MAX-2SAT3 is NP-hard to approximate within a factor of $2011/2012 + \varepsilon$ for any $\varepsilon > 0$.
Now we show that if there were a polynomial-time ($34205/34204-\varepsilon$)-approximation algorithm for the \SupDirSPMinCost problem for some $\varepsilon > 0$, then we would obtain a polynomial-time ($2011/2012 + \varepsilon'$)-approximation algorithm for the MAX-2SAT3 problem for some $\varepsilon'>0$, implying $\text{P} = \text{NP}$.

Let $\varphi = C_1 \wedge \cdots \wedge C_m$ be an instance of the MAX-2SAT3 problem, where $C_1,\ldots,C_m$ are the clauses of $\varphi$, and let $x_1,\ldots,x_n$ denote its variables.
Without loss of generality, we assume that every variable appears both negated and non-negated, and also that each variable appears at most once per clause.

We define a digraph $D=(V,A)$ as follows.
For each $i \in \{ 1, \ldots, n \}$, take the distinct vertices $u_{i1}, \ldots, u_{i4}, \allowbreak v_{i1}, \ldots, v_{i4}$, and $w_i$, and then add the arcs of the dipaths  $P_i^+ = (w_i, v_{i1}, \ldots,\allowbreak v_{i4}, w_{i+1})$ and $P_i^- = (w_i, u_{i1}, \ldots, \allowbreak u_{i4}, w_{i+1})$ associated to the literals $x_i$ and $\neg x_i$, respectively.
For each $j \in \{ 1, \ldots, m \}$, add a new vertex $c_j$ associated to the clause $C_j$ and also add a new vertex $c_{m+1}$ to the digraph.
For each $i \in \{ 1, \ldots, n \}$, consider the occurrences of the variable $x_i$ in $\varphi$.
Let $j_1, j_2 \in \{ 1, \ldots, m \}$ be the smallest indices for which $x_i$ is non-negated in the clause $C_{j_1}$ and is negated in $C_{j_2}$, respectively.
Then add the arcs $c_{j_1} v_{i1}$, $v_{i2} c_{j_1+1}$ and $c_{j_2} u_{i1}$, $u_{i2} c_{j_2+1}$ to the digraph.
If $x_i$ occurs 3 times, then let $j_3 \in \{ 1, \ldots, m \}$ denote the index of the clause other than $C_{j_1}$ and $C_{j_2}$ in which $x_i$ also appears.
If $x_i$ appears non-negated in $C_{j_3}$, then add the arcs $c_{j_3} v_{i3}$, $v_{i4} c_{j_3+1}$ to the digraph; if $x_i$ appears negated in $C_{j_3}$, then add the arcs $c_{j_3} u_{i3}$, $u_{i4} c_{j_3+1}$ to the digraph.
Add two new distinct vertices $s$ and $t$, and add the arcs $sw_1, sc_1$ and $w_{n+1}t, c_{m+1}t$ to the digraph.

Let $A_1$ consist of the arcs $sw_1, w_{n+1}t$, and the arcs of the dipaths $P_i^+$ and $P_i^-$ for each $i \in \{ 1, \ldots, n \}$.
Let $A_2$ consist of all arcs incident to some vertices of $\{ c_1, \ldots, c_{m+1} \}$ and the arcs $v_{i1}v_{i2}, v_{i3}v_{i4}, u_{i1}u_{i2}, u_{i3}u_{i4}$ for each $i \in \{ 1, \ldots, n \}$.
For an example, see Figure~\ref{fig:construction_SupDirSP}.
Clearly, the new instance has polynomial size in the number of clauses and variables of the Boolean formula.

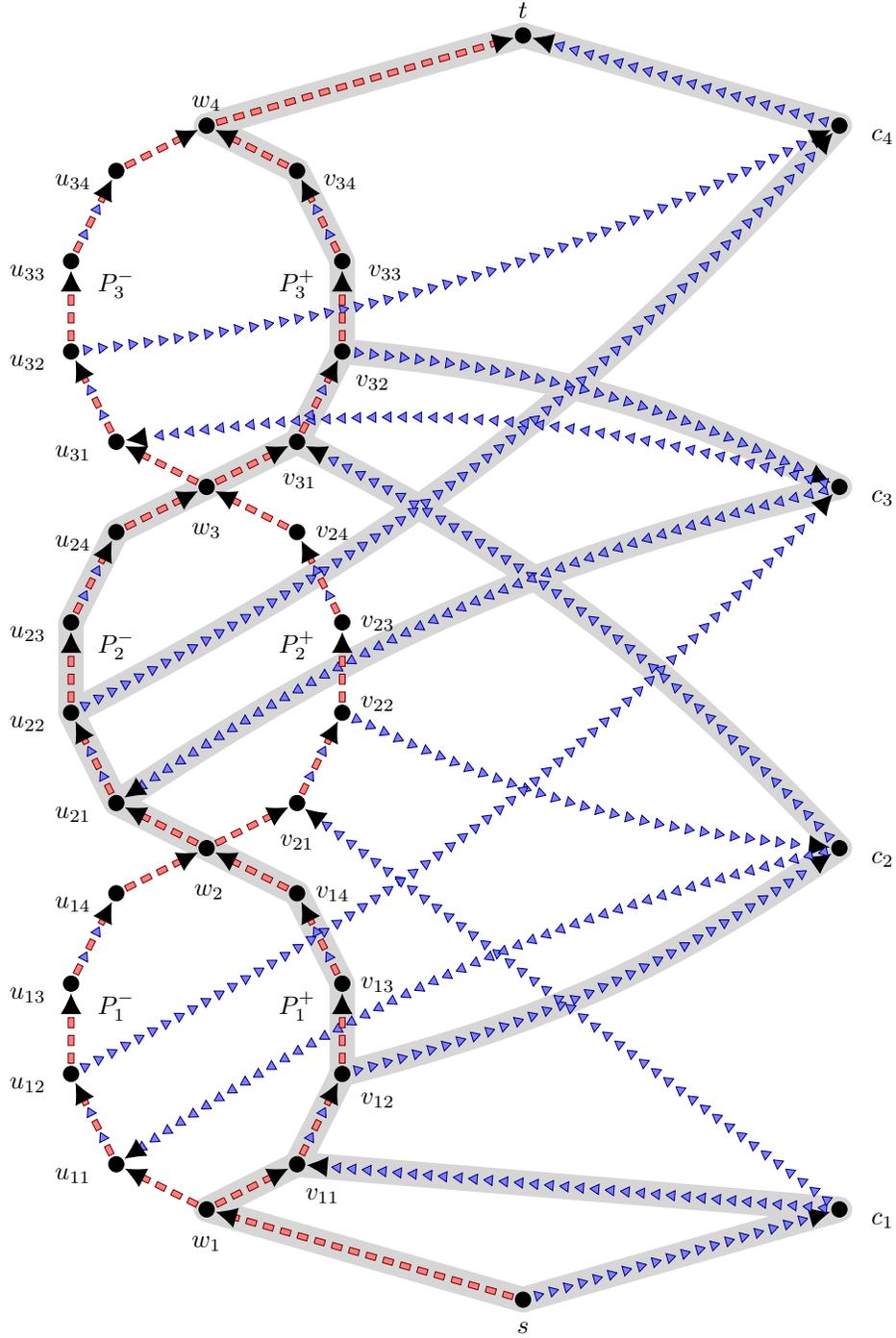
\begin{figure}[!ht]
\centering
\begin{tikzpicture}[scale=0.625]
 \tikzstyle{vertex}=[draw,circle,fill,minimum size=6,inner sep=0]

 \tikzset{edge_blue/.style={postaction={decorate, decoration={markings, mark=between positions 3pt and 1-10pt step 6pt with {\draw[blue!50!black,thin,fill=blue!50] (0:0.08) -- (120:0.08) -- (240:0.08) -- (0:0.08);}}}}}

 \tikzset{edge_red/.style={postaction={decorate, decoration={markings, mark=between positions 3pt and 1-8pt step 7pt with {\draw[red!50!black,thin,fill=red!50] (30:0.08) -- (150:0.08) -- (210:0.08) -- (330:0.08) -- (30:0.08);}}}}}

 \tikzset{edge_red_and_blue/.style={postaction={decorate, decoration={markings, mark=between positions 3pt and 1-8pt step 11pt with {\draw[red!50!black,thin,fill=red!50] (30:0.08) -- (150:0.08) -- (210:0.08) -- (330:0.08) -- (30:0.08);}}}, postaction={decorate, decoration={markings, mark=between positions 8.5pt and 1-8pt step 11pt with {\draw[blue!50!black,thin,fill=blue!50] (0:0.08) -- (120:0.08) -- (240:0.08) -- (0:0.08);}}}}}

 \node at (-2,4.5) {$P^-_1$};
 \node at (-2,12.5) {$P^-_2$};
 \node at (-2,20.5) {$P^-_3$};
 \node at (2,4.5) {$P^+_1$};
 \node at (2,12.5) {$P^+_2$};
 \node at (2,20.5) {$P^+_3$};

 \node[vertex] (s) at (7,-2) [label={[xshift=0pt, yshift=-20pt]:$s$}] {};
 \node[vertex] (t) at (7,26) [label={[xshift=0pt, yshift=0pt]:$t$}] {};

 \node[vertex] (w1) at (0,0) [label={[xshift=0pt, yshift=-23pt]:$w_1$}] {};
 \node[vertex] (u11) at (-2,1) [label={[xshift=-17pt, yshift=-14pt]:$u_{11}$}] {};
 \node[vertex] (u12) at (-3,3) [label={[xshift=-17pt, yshift=-14pt]:$u_{12}$}] {};
 \node[vertex] (u13) at (-3,5) [label={[xshift=-17pt, yshift=-14pt]:$u_{13}$}] {};
 \node[vertex] (u14) at (-2,7) [label={[xshift=-17pt, yshift=-14pt]:$u_{14}$}] {};
 \node[vertex] (v11) at (2,1) [label={[xshift=10pt, yshift=-22pt]:$v_{11}$}] {};
 \node[vertex] (v12) at (3,3) [label={[xshift=14pt, yshift=-20pt]:$v_{12}$}] {};
 \node[vertex] (v13) at (3,5) [label={[xshift=14pt, yshift=-10pt]:$v_{13}$}] {};
 \node[vertex] (v14) at (2,7) [label={[xshift=14pt, yshift=-10pt]:$v_{14}$}] {};
 \node[vertex] (w2) at (0,8) [label={[xshift=1pt, yshift=-27pt]:$w_2$}] {};
 \node[vertex] (u21) at (-2,9) [label={[xshift=-17pt, yshift=-14pt]:$u_{21}$}] {};
 \node[vertex] (u22) at (-3,11) [label={[xshift=-17pt, yshift=-14pt]:$u_{22}$}] {};
 \node[vertex] (u23) at (-3,13) [label={[xshift=-17pt, yshift=-14pt]:$u_{23}$}] {};
 \node[vertex] (u24) at (-2,15) [label={[xshift=-17pt, yshift=-14pt]:$u_{24}$}] {};
 \node[vertex] (v21) at (2,9) [label={[xshift=0pt, yshift=-25pt]:$v_{21}$}] {};
 \node[vertex] (v22) at (3,11) [label={[xshift=14pt, yshift=-6pt]:$v_{22}$}] {};
 \node[vertex] (v23) at (3,13) [label={[xshift=14pt, yshift=-10pt]:$v_{23}$}] {};
 \node[vertex] (v24) at (2,15) [label={[xshift=14pt, yshift=-10pt]:$v_{24}$}] {};
 \node[vertex] (w3) at (0,16) [label={[xshift=0pt, yshift=-27pt]:$w_3$}] {};
 \node[vertex] (u31) at (-2,17) [label={[xshift=-17pt, yshift=-14pt]:$u_{31}$}] {};
 \node[vertex] (u32) at (-3,19) [label={[xshift=-17pt, yshift=-14pt]:$u_{32}$}] {};
 \node[vertex] (u33) at (-3,21) [label={[xshift=-17pt, yshift=-14pt]:$u_{33}$}] {};
 \node[vertex] (u34) at (-2,23) [label={[xshift=-17pt, yshift=-14pt]:$u_{34}$}] {};
 \node[vertex] (v31) at (2,17) [label={[xshift=1pt, yshift=-25pt]:$v_{31}$}] {};
 \node[vertex] (v32) at (3,19) [label={[xshift=12pt, yshift=-22pt]:$v_{32}$}] {};
 \node[vertex] (v33) at (3,21) [label={[xshift=17pt, yshift=-14pt]:$v_{33}$}] {};
 \node[vertex] (v34) at (2,23) [label={[xshift=17pt, yshift=-14pt]:$v_{34}$}] {};
 \node[vertex] (w4) at (0,24) [label={[xshift=0pt, yshift=-1pt]:$w_4$}] {};

 \node[vertex] (c1) at (14,0) [label={[xshift=17pt, yshift=-14pt]:$c_1$}] {};
 \node[vertex] (c2) at (14,8) [label={[xshift=17pt, yshift=-14pt]:$c_2$}] {};
 \node[vertex] (c3) at (14,16) [label={[xshift=17pt, yshift=-14pt]:$c_3$}] {};
 \node[vertex] (c4) at (14,24) [label={[xshift=17pt, yshift=-14pt]:$c_4$}] {};

 \begin{pgfonlayer}{background}
      \begin{scope}[opacity=.8,transparency group]
        \highlight{10pt}{black!20}{(s.center) to [bend left =0] (w1.center) to [bend left =0] (v11.center) to [bend left =0] (v12.center) to [bend left =0] (v13.center) to [bend left =0] (v14.center) to [bend left =0] (w2.center) to [bend left =0] (u21.center) to [bend left =0] (u22.center) to [bend left =0] (u23.center) to [bend left =0] (u24.center) to [bend left =0] (w3.center) to [bend left =0] (v31.center) to [bend left =0] (v32.center) to [bend left =0] (v33.center) to [bend left =0] (v34.center) to [bend left =0] (w4.center) to [bend left =0] (t.center)}
        \highlight{10pt}{black!20}{(s.center) to [bend left =0] (c1.center) to [bend left =0] (v11.center) to [bend left =0] (v12.center) to [bend right =10] (c2.center) to [bend right =10] (v31.center) to [bend left =0] (v32.center) to [bend left =10] (c3.center) to [bend right=10] (u21.center) to [bend left =0] (u22.center) to [bend right=10] (c4.center) to [bend left =0] (t.center)}
      \end{scope}
 \end{pgfonlayer}

 \path[edge_red] (s) -- (w1);
 \draw[->,>={LaTeX[black,length=9pt, width=9pt]},draw opacity=0] (s) -- (w1);
 \path[edge_blue] (s) -- (c1);
 \draw[->,>={LaTeX[black,length=9pt, width=9pt]},draw opacity=0] (s) -- (c1);
 \path[edge_red] (w4) -- (t);
 \draw[->,>={LaTeX[black,length=9pt, width=9pt]},draw opacity=0] (w4) -- (t);
 \path[edge_blue] (c4) -- (t);
 \draw[->,>={LaTeX[black,length=9pt, width=9pt]},draw opacity=0] (c4) -- (t);

 \path[edge_blue] (c1) -- (v11);
 \draw[->,>={LaTeX[black,length=9pt, width=9pt]},draw opacity=0] (c1) -- (v11);
 \path[edge_blue] (v12) to [bend right=10] (c2);
 \draw[->,>={LaTeX[black,length=9pt, width=9pt]},draw opacity=0] (v12) to [bend right=10] (c2);
 \path[edge_blue] (c1) -- (v21);
 \draw[->,>={LaTeX[black,length=9pt, width=9pt]},draw opacity=0] (c1) -- (v21);
 \path[edge_blue] (v22) to [bend right=10] (c2);
 \draw[->,>={LaTeX[black,length=9pt, width=9pt]},draw opacity=0] (v22) to [bend right=10] (c2);

 \path[edge_blue] (c2) to [bend right=10] (u11);
 \draw[->,>={LaTeX[black,length=9pt, width=9pt]},draw opacity=0] (c2) to [bend right=10] (u11);
 \path[edge_blue] (u12) to [bend right=10] (c3);
 \draw[->,>={LaTeX[black,length=9pt, width=9pt]},draw opacity=0] (u12) to [bend right=10] (c3);
 \path[edge_blue] (c2) to [bend right=10] (v31);
 \draw[->,>={LaTeX[black,length=9pt, width=9pt]},draw opacity=0] (c2) to [bend right=10] (v31);
 \path[edge_blue] (v32) to [bend left=10] (c3);
 \draw[->,>={LaTeX[black,length=9pt, width=9pt]},draw opacity=0] (v32) to [bend left=10] (c3);

 \path[edge_blue] (c3) to [bend right=10] (u21);
 \draw[->,>={LaTeX[black,length=9pt, width=9pt]},draw opacity=0] (c3) to [bend right=10] (u21);
 \path[edge_blue] (u22) to [bend right=10] (c4);
 \draw[->,>={LaTeX[black,length=9pt, width=9pt]},draw opacity=0] (u22) to [bend right=10] (c4);
 \path[edge_blue] (c3) to [bend right=12] (u31);
 \draw[->,>={LaTeX[black,length=9pt, width=9pt]},draw opacity=0] (c3) to [bend right=12] (u31);
 \path[edge_blue] (u32) to [bend right=10] (c4);
 \draw[->,>={LaTeX[black,length=9pt, width=9pt]},draw opacity=0] (u32) to [bend right=10] (c4);

 \path[edge_red] (w1) -- (u11);
 \draw[->,>={LaTeX[black,length=9pt, width=9pt]},draw opacity=0] (w1) -- (u11);
 \path[edge_red_and_blue] (u11) -- (u12);
 \draw[->,>={LaTeX[black,length=9pt, width=9pt]},draw opacity=0] (u11) -- (u12);
 \path[edge_red] (u12) -- (u13);
 \draw[->,>={LaTeX[black,length=9pt, width=9pt]},draw opacity=0] (u12) -- (u13);
 \path[edge_red_and_blue] (u13) -- (u14);
 \draw[->,>={LaTeX[black,length=9pt, width=9pt]},draw opacity=0] (u13) -- (u14);
 \path[edge_red] (u14) -- (w2);
 \draw[->,>={LaTeX[black,length=9pt, width=9pt]},draw opacity=0] (u14) -- (w2);
 \path[edge_red] (w1) -- (v11);
 \draw[->,>={LaTeX[black,length=9pt, width=9pt]},draw opacity=0] (w1) -- (v11);
 \path[edge_red_and_blue] (v11) -- (v12);
 \draw[->,>={LaTeX[black,length=9pt, width=9pt]},draw opacity=0] (v11) -- (v12);
 \path[edge_red] (v12) -- (v13);
 \draw[->,>={LaTeX[black,length=9pt, width=9pt]},draw opacity=0] (v12) -- (v13);
 \path[edge_red_and_blue] (v13) -- (v14);
 \draw[->,>={LaTeX[black,length=9pt, width=9pt]},draw opacity=0] (v13) -- (v14);
 \path[edge_red] (v14) -- (w2);
 \draw[->,>={LaTeX[black,length=9pt, width=9pt]},draw opacity=0] (v14) -- (w2);
 \path[edge_red] (w2) -- (u21);
 \draw[->,>={LaTeX[black,length=9pt, width=9pt]},draw opacity=0] (w2) -- (u21);
 \path[edge_red_and_blue] (u21) -- (u22);
 \draw[->,>={LaTeX[black,length=9pt, width=9pt]},draw opacity=0] (u21) -- (u22);
 \path[edge_red] (u22) -- (u23);
 \draw[->,>={LaTeX[black,length=9pt, width=9pt]},draw opacity=0] (u22) -- (u23);
 \path[edge_red_and_blue] (u23) -- (u24);
 \draw[->,>={LaTeX[black,length=9pt, width=9pt]},draw opacity=0] (u23) -- (u24);
 \path[edge_red] (u24) -- (w3);
 \draw[->,>={LaTeX[black,length=9pt, width=9pt]},draw opacity=0] (u24) -- (w3);
 \path[edge_red] (w2) -- (v21);
 \draw[->,>={LaTeX[black,length=9pt, width=9pt]},draw opacity=0] (w2) -- (v21);
 \path[edge_red_and_blue] (v21) -- (v22);
 \draw[->,>={LaTeX[black,length=9pt, width=9pt]},draw opacity=0] (v21) -- (v22);
 \path[edge_red] (v22) -- (v23);
 \draw[->,>={LaTeX[black,length=9pt, width=9pt]},draw opacity=0] (v22) -- (v23);
 \path[edge_red_and_blue] (v23) -- (v24);
 \draw[->,>={LaTeX[black,length=9pt, width=9pt]},draw opacity=0] (v23) -- (v24);
 \path[edge_red] (v24) -- (w3);
 \draw[->,>={LaTeX[black,length=9pt, width=9pt]},draw opacity=0] (v24) -- (w3);
 \path[edge_red] (w3) -- (u31);
 \draw[->,>={LaTeX[black,length=9pt, width=9pt]},draw opacity=0] (w3) -- (u31);
 \path[edge_red_and_blue] (u31) -- (u32);
 \draw[->,>={LaTeX[black,length=9pt, width=9pt]},draw opacity=0] (u31) -- (u32);
 \path[edge_red] (u32) -- (u33);
 \draw[->,>={LaTeX[black,length=9pt, width=9pt]},draw opacity=0] (u32) -- (u33);
 \path[edge_red_and_blue] (u33) -- (u34);
 \draw[->,>={LaTeX[black,length=9pt, width=9pt]},draw opacity=0] (u33) -- (u34);
 \path[edge_red] (u34) -- (w4);
 \draw[->,>={LaTeX[black,length=9pt, width=9pt]},draw opacity=0] (u34) -- (w4);
 \path[edge_red] (w3) -- (v31);
 \draw[->,>={LaTeX[black,length=9pt, width=9pt]},draw opacity=0] (w3) -- (v31);
 \path[edge_red_and_blue] (v31) -- (v32);
 \draw[->,>={LaTeX[black,length=9pt, width=9pt]},draw opacity=0] (v31) -- (v32);
 \path[edge_red] (v32) -- (v33);
 \draw[->,>={LaTeX[black,length=9pt, width=9pt]},draw opacity=0] (v32) -- (v33);
 \path[edge_red_and_blue] (v33) -- (v34);
 \draw[->,>={LaTeX[black,length=9pt, width=9pt]},draw opacity=0] (v33) -- (v34);
 \path[edge_red] (v34) -- (w4);
 \draw[->,>={LaTeX[black,length=9pt, width=9pt]},draw opacity=0] (v34) -- (w4);
\end{tikzpicture}
\caption{The digraph constructed in the proof of Theorem~\ref{thm:superset_directed_APXC} to the Boolean formula $\varphi = (x_1 \lor x_2) \land (\neg x_1 \lor x_3) \land (\neg x_2 \lor \neg x_3)$.
  The arcs marked with (red) rectangles and with (blue) triangles correspond to the arcs in $A_1\setminus A_2$ and $A_2\setminus A_1$, respectively.
  The arcs marked with alternating (red) rectangles and (blue) triangles are in $A_1\cap A_2$.
  The highlighted arcs form the subgraph $A'$ corresponding to the truth assignment $x_1 = x_3 = \texttt{true}$ and $x_2 = \texttt{false}$.}
\label{fig:construction_SupDirSP}
\end{figure}

Let $m_{\mathrm{s}}^*$ denote the maximum number of satisfied clauses under any truth assignment of $\varphi$, and let $A^* \subseteq A$ be an optimal solution of \SupDirSPMinCost in $D$.
Then both $A^* \cap A_1$ and $A^* \cap A_2$ contains an $s$-$t$ dipath and $A^*$ has minimum size.

\begin{claim} \label{cl:SdspToSat}
We have $m - m_{\mathrm{s}}^* \ge |A^*| - (5n + 2m + 4)$.
\end{claim}
\begin{proof}
Consider an optimal truth assignment of $\varphi$.
For each $j \in \{ 1, \ldots, m \}$, let us define an index $i_j \in \{ 1, \ldots, n \}$ as follows.
If $C_j$ is satisfied, then let $i_j$ be the index of the variable corresponding to the first true literal occurring in $C_j$.
Otherwise, let $i_j$ be the index of the variable corresponding to the first literal occurring in $C_j$.
Now we construct a feasible solution $A'$ of \SupDirSPMinCost.
Let $A'$ contain all the arcs incident to $s$ or $t$.
For each $i \in \{ 1, \ldots, n \}$, add the arcs of $P_i^+$ to $A'$ if $x_i$ is true, otherwise add the arcs of $P_i^-$ to $A'$.
For each $j \in \{ 1, \ldots, m \}$, add the unique arc $a_{j1}$ going from the vertex $c_j$ to the vertices of $P_{i_j}^+ \cup P_{i_j}^-$, and add the unique arc $a_{j2}$ going from the vertices of $P_{i_j}^+ \cup P_{i_j}^-$ to the vertex $c_{j+1}$ to $A'$.
Add the arc going from the head of $a_{j1}$ to the tail of $a_{j2}$ to $A'$.
Note that $A'$ contains unique $s$-$t$ dipaths $P_1 \subseteq A_1$ and $P_2 \subseteq A_2$, respectively.
Moreover, $A' = P_1 \cup P_2$.
By construction, for each $j \in \{ 1, \ldots, m \}$, this assignment does not satisfy the clause $C_j$ if and only if the sub-dipath of $P_2$ from $c_j$ to $c_{j+1}$ does not intersect $P_1$.
Then
\begin{multline*}
 m - m^*_{\mathrm{s}} = \big| (P_2 \setminus P_1) \cap (A_1 \cap A_2) \big| = |A'| - \big| A' \setminus (A_1 \cap A_2) \big| - \big| P_1 \cap (A_1 \cap A_2) \big| \\ = |A'| - (3n+2m+4) - 2n = |A'| - (5n+2m+4) \ge |A^*| - (5n+2m+4),
\end{multline*}
which proves the claim.
\end{proof}

\begin{claim}\label{cl:SatToSdsp}
  Let $A'$ be an arbitrary inclusion-wise minimal feasible solution of \SupDirSPMinCost in the constructed digraph $D$.
  Then there exists a truth assignment such that at most $|A'| - (5n + 2m + 4)$ clauses of $\varphi$ are unsatisfied, that is, $m - m'_{\mathrm{s}} \le |A'| - (5n + 2m + 4)$.
\end{claim}
\begin{proof}
By the minimality of $A'$, both $A' \cap A_1$ and $A' \cap A_2$ contain unique $s$-$t$ dipaths, $P_1$ and $P_2$, respectively.
Moreover, $A' = P_1 \cup P_2$.
Clearly, $P_1$ contains either $P^+_i$ or $P^-_i$ for each $i \in \{ 1, \ldots, n \}$.
Set the value of $x_i$ to be true in the former case, and set its value to be false in the latter case.
By construction, for each $j \in \{ 1, \ldots, m \}$, this assignment does not satisfy the clause $C_j$ if and only if the sub-dipath of $P_2$ from $c_j$ to $c_{j+1}$ does not intersect $P_1$.
Observe that $P_1$ contains exactly $5n+2$ arcs, among which exactly $2n$ is in $A_1 \cap A_2$.
By construction, $P_2$ contains the vertices $c_1, \ldots, c_{m+1}$ in this order with exactly three arcs between them.
Thus $P_2$ contains exactly $3m+2$ arcs, among which exactly $m$ is in $A_1 \cap A_2$.
These imply that
\begin{multline*}
 m - m_{\mathrm{s}}' = \big| (P_2 \setminus P_1) \cap (A_1 \cap A_2) \big| = \big| A' \cap (A_1 \cap A_2) \big| - \big| P_1 \cap (A_1 \cap A_2) \big| \\
 = |A'| - \big| A' \setminus (A_1\cap A_2) \big| - 2n
 = |A'| - \big( (5n+2 - 2n) + (3m+2 - m) \big) - 2n = |A'| - (5n + 2m + 4),
\end{multline*}
which proves the claim.
\end{proof}

Finally, we show that for any $\varepsilon > 0$ and for any inclusion-wise feasible solution $A'$ of the \SupDirSPMinCost with $|A'| \le (1 + \varepsilon) \cdot |A^*|$, the truth assignment constructed in the proof of Claim~\ref{cl:SatToSdsp} satisfies at least $(1 - 17 \varepsilon) \cdot m_{\mathrm{s}}^*$ clauses of $\varphi$.

Note that by the above calculations, we have
\begin{multline*}
 m'_{\mathrm{s}} \ge m - \big( |A'| - (5n + 2m + 4) \big) \ge m - \big( (1 + \varepsilon) \cdot |A^*| - (5n + 2m + 4) \big) \\
 = m - (1 + \varepsilon) \big( |A^*| - (5n + 2m + 4) \big) - \varepsilon \cdot (5n + 2m + 4) \ge m - (1 + \varepsilon)(m - m_{\mathrm{s}}^*) - \varepsilon \cdot (5n + 2m + 4) \\
 = (1 + \varepsilon) \cdot m_{\mathrm{s}}^* - \varepsilon \cdot (5n + 3m + 4) \text{.}
\end{multline*}

Since each variable occurs at least twice and each clause contains at most 2 literals, clearly $2n \le 2m$ holds.
Without loss of generality, we can assume that $m \ge 4$.
Note that a uniform random truth assignment satisfies at least $m/2$ clauses in expectation as clauses of size one are satisfied with probability $1/2$ and those of size two with probability $3/4$.
Thus $m \le 2 m_{\mathrm{s}}^*$.
Therefore, we can obtain
\begin{multline*}
 m'_{\mathrm{s}} \ge (1 + \varepsilon) \cdot m_{\mathrm{s}}^* - \varepsilon \cdot (5n + 3m + 4) \ge (1 + \varepsilon) \cdot m_{\mathrm{s}}^* - \varepsilon \cdot (5m + 3m + m) \ge (1 + \varepsilon) \cdot m_{\mathrm{s}}^* - \varepsilon \cdot 9 m \\
 \ge (1 + \varepsilon) \cdot m_{\mathrm{s}}^* - \varepsilon \cdot 18 m_{\mathrm{s}}^* \ge \left( 1 - 17 \varepsilon \right) \cdot m_{\mathrm{s}}^* \text{.}
\end{multline*}

Hence if $\varepsilon < 1/34204$, then given a ($1+\varepsilon$)-approximation algorithm for \SupDirSPMinCost, one obtains a ($1-17\varepsilon$)-approximation algorithm for MAX-2SAT3, which is better than the inapproximability threshold.
This completes the proof of the theorem.
\end{proof}

In the rest of this section, we focus on directed acyclic graphs.

\begin{thm}\label{thm:exact_DAG_NPC}
The \ExDirSPEx problem is NP-complete even for directed acyclic graphs.
\end{thm}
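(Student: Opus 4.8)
The plan is to prove membership in NP (a candidate $A'$ is verified in polynomial time by checking that $A'\cap A_i$ is an $s$-$t$ dipath for each $i$) and then to give a polynomial-time reduction to \ExDirSPEx restricted to acyclic digraphs. One cannot simply reuse the reduction of Theorem~\ref{thm:exact_directed_NPC}, since \textsc{TwoDisjointDipaths} is polynomial-time solvable on DAGs (for fixed $k$, by Fortune--Hopcroft--Wyllie); instead I would reduce from \textsc{Positive-1-in-3-SAT} (each clause has three distinct non-negated literals, and one seeks an assignment making exactly one literal true in each clause), a classical NP-complete problem. Given such a formula with variables $x_1,\dots,x_n$ and clauses $C_1,\dots,C_m$, build a digraph with a global source $s$ and sink $t$ and two families of color classes. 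For each variable $x_v$, let $C_{j_1},\dots,C_{j_{d_v}}$ be the clauses containing it; introduce an \emph{occurrence arc} $\tau_v^{(p)}=a_v^{(p)}b_v^{(p)}$ for $p=1,\dots,d_v$, and let the color class $A_v$ be the ``true path'' $s\,a_v^{(1)}b_v^{(1)}a_v^{(2)}\cdots b_v^{(d_v)}\,t$ (which strings together all occurrence arcs of $x_v$) together with a private two-arc ``false path'' $s\,z_v\,t$. For each clause $C_j$ with literals $x_{v_1},x_{v_2},x_{v_3}$, let the color class $B_j$ consist, for $i=1,2,3$, of the path $s\,\ell_{j,i}\,a_{v_i}^{(p_i)}\,b_{v_i}^{(p_i)}\,r_{j,i}\,t$ that traverses the occurrence arc $\tau_{v_i}^{(p_i)}=a_{v_i}^{(p_i)}b_{v_i}^{(p_i)}$, where $\ell_{j,i},r_{j,i}$ are new private vertices and $p_i$ records which occurrence of $x_{v_i}$ this is. The occurrence arcs are exactly the arcs lying in two color classes ($\tau_v^{(p)}\in A_v\cap B_{j_p}$); every other arc is private to one class, so the classes cover the arc set. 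Ordering the vertices as $s$, then all $z_v$, then all lead-in vertices $\ell_{j,i}$, then the variable gadgets block by block ($a_v^{(1)},b_v^{(1)},\dots,a_v^{(d_v)},b_v^{(d_v)}$ for $v=1,\dots,n$), then all lead-out vertices $r_{j,i}$, then $t$, shows the construction is acyclic; it clearly has polynomial size.

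For correctness the two structural facts are: (i) $A_v$ has exactly two $s$-$t$ dipaths, its true path and its false path, so $A'\cap A_v$ is exactly one of them; and (ii) $B_j$ is the union of three internally vertex-disjoint $s$-$t$ dipaths, so $A'\cap B_j$ is exactly one of them. Since $\tau_v^{(p)}$ lies on the true path of $A_v$ but on neither the false path of $A_v$ nor any path of $B_{j_p}$ other than the one for the corresponding literal, we get $\tau_v^{(p)}\in A'$ if and only if $A'\cap A_v$ is the true path. Now if $A'$ is feasible, set $x_v$ true exactly when $A'\cap A_v$ is the true path; fix a clause $C_j$ and let $A'\cap B_j$ be the path for literal $x_{v_i}$. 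Then $\tau_{v_i}^{(p_i)}\in A'$, so $x_{v_i}$ is true; and for $i'\ne i$ the arc $\tau_{v_{i'}}^{(p_{i'})}$ must be \emph{absent} from $A'$, since otherwise it would be an extra arc of $A'\cap B_j$, contradicting that $A'\cap B_j$ is a single exact dipath --- hence $x_{v_{i'}}$ is false. So $C_j$ has exactly one true literal. Conversely, a valid $1$-in-$3$ assignment yields a feasible $A'$: for each $v$ take the true path of $A_v$ if $x_v$ is true (putting all occurrence arcs of $x_v$ in $A'$) and the false path otherwise (keeping them all out); for each clause add the private lead-in/lead-out arcs of the path for its unique true literal. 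One checks that $A'\cap A_v$ and $A'\cap B_j$ are then exactly the intended dipaths, the only point being that no stray occurrence arc pollutes a $B_j$, which holds precisely because each clause has exactly one true literal.

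The main conceptual obstacle --- and the reason ordinary $3$-SAT and the Theorem~\ref{thm:exact_directed_NPC} approach do not extend here --- is the rigidity of the ``exact'' requirement: if a shared arc ends up in $A'$, it must lie on the selected dipath of \emph{every} color class containing it, so the classes propagate choices with no slack. A clause gadget built on an ``at least one literal is true'' principle would therefore break as soon as two literals were true, since the second occurrence arc would appear as a dangling arc in the clause's color class; the exactly-one-in-three variant is designed precisely to avoid this mismatch. The remaining work is routine but needs care: confirming that the two-color arcs can be laid out so that a single topological order exists (handled by the ordering above), and verifying NP membership and the polynomial size bound. As an alternative source problem one could also use a many-pairs vertex-disjoint paths problem on acyclic digraphs together with per-arc ``which commodity uses this arc'' selector gadgets, but the $1$-in-$3$ reduction above seems cleaner.
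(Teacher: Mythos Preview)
Your proof is correct and follows the same broad template as the paper's --- a gadget reduction with variable paths encoding a truth assignment and one color class per clause selecting a literal --- but with two differences worth noting. First, you use one color class per variable while the paper bundles all variable gadgets into a single color class $A_{m+1}$; this is cosmetic. Second, and more significantly, you reduce from \textsc{Positive 1-in-3 SAT} whereas the paper reduces from 3SAT3. Your diagnosis that the ``exact'' constraint forces exactly-one-true-literal semantics is precisely right, and is in fact the more careful choice: in the paper's construction, if a 3SAT3 satisfying assignment makes two literals of some clause $C_j$ true, the second occurrence arc lands in $A'\cap A_j$ as a stray arc off the chosen $s$-$t$ dipath, so the paper's forward direction (``it is not difficult to see that $A'$ is a feasible solution'') does not go through as stated. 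Your reduction sidesteps this by matching the source problem to the exactly-one behaviour the gadget actually enforces; the paper's construction would be repaired by the same change of source problem.
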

\begin{proof}
The problem is clearly in NP.
We give a reduction from the NP-complete 3SAT3 problem~\cite{Tovey1984Simplified} to the \ExDirSPEx problem for directed acyclic graphs.
Our goal in the 3SAT3 problem is to decide whether a Boolean formula in conjunctive normal form where each clause has size at most 3 and each variable occurs at most 3 times is satisfiable.

Let $\varphi = C_1 \wedge \cdots \wedge C_m$ be an instance of 3SAT3, where $C_1,\ldots,C_m$ are the clauses of $\varphi$, and let $x_1,\ldots,x_n$ denote its variables.
Without loss of generality, we assume that every variable appears both negated and non-negated, and also that each variable appears at most once per clause.

We define a directed acyclic graph $D=(V,A)$ as follows.
For each $i \in \{ 1, \ldots, n \}$, take the distinct vertices $u_{i1}, \ldots, u_{i4}, v_{i1}, \ldots, v_{i4}$, and $w_i$, then add the arcs of the dipaths  $P_i^+ = (w_i, v_{i1}, \ldots, w_{14},\allowbreak v_{i4}, w_{i+1})$ and $P_i^- = (w_i, u_{i1}, \ldots, u_{i4}, w_{i+1})$ associated to the literals $x_i$ and $\neg x_i$, respectively.
For each $j \in \{ 1, \ldots, m \}$, add two new vertices $s_j$ and $t_j$ associated to the clause $C_j$.
For each $i \in \{ 1, \ldots, n \}$, consider the occurrences of the variable $x_i$ in $\varphi$.
Let $j_1, j_2 \in \{ 1, \ldots, m \}$ be the smallest indices for which $x_i$ is non-negated in the clause $C_{j_1}$ and is negated in $C_{j_2}$, respectively.
Then add the arcs $s_{j_1} v_{i1}$, $v_{i2} t_{j_1}$ and $s_{j_2} u_{i1}$, $u_{i2} t_{j_2}$ to the digraph.
If $x_i$ occurs 3 times, then let $j_3 \in \{ 1, \ldots, m \}$ denote the index of the clause other than $C_{j_1}$ and $C_{j_2}$ in which $x_i$ also appears.
If $x_i$ appears non-negated in $C_{j_3}$, then add the arcs $s_{j_3} v_{i3}$, $v_{i4} t_{j_3}$ to the digraph; if $x_i$ appears negated in $C_{j_3}$, then add the arcs $s_{j_3} u_{i3}$, $u_{i4} t_{j_3}$ to the digraph.
Add two new distinct vertices $s$ and $t$, and add the arcs $sw_1$ and $w_{n+1} t$, and for each $j \in \{ 1, \ldots, m \}$, add the arcs $ss_j$ and $t_j t$ to the digraph.

Let $A_j$ consist of the arcs $ss_j, t_j t$, and the arcs of all $s_j$-$t_j$ dipaths of length 3 for each $j \in \{ 1, \ldots, m \}$.
Let $A_{m+1}$ consist of the arcs $sw_1, w_{n+1} t$ and the arcs of the dipaths $P^+_i$ and $P^-_i$ for each $i \in \{ 1, \ldots, n \}$.
For an example, see Figure~\ref{fig:construction_ExDagSP}.
Clearly, the new instance has polynomial size in the number of clauses and variables of the Boolean formula.

\begin{figure}[!ht]
\centering
\begin{tikzpicture}[scale=0.625]
 \tikzstyle{vertex}=[draw,circle,fill,minimum size=6,inner sep=0]

 \tikzset{edge_blue/.style={postaction={decorate, decoration={markings, mark=between positions 3pt and 1-10pt step 6pt with {\draw[blue!50!black,thin,fill=blue!50] (0:0.08) -- (120:0.08) -- (240:0.08) -- (0:0.08);}}}}}

 \tikzset{edge_red/.style={postaction={decorate, decoration={markings, mark=between positions 3pt and 1-8pt step 7pt with {\draw[red!50!black,thin,fill=red!50] (30:0.08) -- (150:0.08) -- (210:0.08) -- (330:0.08) -- (30:0.08);}}}}}

 \tikzset{edge_green/.style={postaction={decorate, decoration={markings, mark=between positions 5pt and 1-8pt step 8.5pt with {\draw[green!50!black,thin,fill=green!50] (0:0.16) -- (90:0.08) -- (180:0.08) -- (270:0.08) -- (0:0.16);}}}}}

 \tikzset{edge_yellow/.style={postaction={decorate, decoration={markings, mark=between positions 5pt and 1-8pt step 7pt with {\draw[yellow!50!black,thin,fill=yellow!50] (0:0.09) -- (72:0.09) -- (144:0.09) -- (216:0.09) -- (288:0.09) -- (0:0.09);}}}}}

 \tikzset{edge_red_and_blue/.style={postaction={decorate, decoration={markings, mark=between positions 3pt and 1-8pt step 11pt with {\draw[red!50!black,thin,fill=red!50] (30:0.08) -- (150:0.08) -- (210:0.08) -- (330:0.08) -- (30:0.08);}}}, postaction={decorate, decoration={markings, mark=between positions 8.5pt and 1-8pt step 11pt with {\draw[blue!50!black,thin,fill=blue!50] (0:0.08) -- (120:0.08) -- (240:0.08) -- (0:0.08);}}}}}

 \tikzset{edge_red_and_green/.style={postaction={decorate, decoration={markings, mark=between positions 3pt and 1-8pt step 13pt with {\draw[red!50!black,thin,fill=red!50] (30:0.08) -- (150:0.08) -- (210:0.08) -- (330:0.08) -- (30:0.08);}}}, postaction={decorate, decoration={markings, mark=between positions 8.5pt and 1-8pt step 13pt with {\draw[green!50!black,thin,fill=green!50] (0:0.16) -- (90:0.08) -- (180:0.08) -- (270:0.08) -- (0:0.16);}}}}}

 \tikzset{edge_red_and_yellow/.style={postaction={decorate, decoration={markings, mark=between positions 3pt and 1-8pt step 13pt with {\draw[red!50!black,thin,fill=red!50] (30:0.08) -- (150:0.08) -- (210:0.08) -- (330:0.08) -- (30:0.08);}}}, postaction={decorate, decoration={markings, mark=between positions 9.5pt and 1-8pt step 13pt with {\draw[yellow!50!black,thin,fill=yellow!50] (0:0.09) -- (72:0.09) -- (144:0.09) -- (216:0.09) -- (288:0.09) -- (0:0.09);}}}}}

 \node at (-2,4.5) {$P^-_1$};
 \node at (-2,12.5) {$P^-_2$};
 \node at (-2,20.5) {$P^-_3$};
 \node at (2,4.5) {$P^+_1$};
 \node at (2,12.5) {$P^+_2$};
 \node at (2,20.5) {$P^+_3$};

 \node[vertex] (s) at (-7,-2) [label={[xshift=-15pt, yshift=-10pt]:$s$}] {};
 \node[vertex] (t) at (7,26) [label={[xshift=15pt, yshift=-10pt]:$t$}] {};

 \node[vertex] (w1) at (0,0) [label={[xshift=0pt, yshift=-25pt]:$w_1$}] {};
 \node[vertex] (u11) at (-2,1) [label={[xshift=-17pt, yshift=-14pt]:$u_{11}$}] {};
 \node[vertex] (u12) at (-3,3) [label={[xshift=-14pt, yshift=-11pt]:$u_{12}$}] {};
 \node[vertex] (u13) at (-3,5) [label={[xshift=-15pt, yshift=-12pt]:$u_{13}$}] {};
 \node[vertex] (u14) at (-2,7) [label={[xshift=-17pt, yshift=-14pt]:$u_{14}$}] {};
 \node[vertex] (v11) at (2,1) [label={[xshift=10pt, yshift=-22pt]:$v_{11}$}] {};
 \node[vertex] (v12) at (3,3) [label={[xshift=14pt, yshift=-20pt]:$v_{12}$}] {};
 \node[vertex] (v13) at (3,5) [label={[xshift=14pt, yshift=-10pt]:$v_{13}$}] {};
 \node[vertex] (v14) at (2,7) [label={[xshift=14pt, yshift=-10pt]:$v_{14}$}] {};
 \node[vertex] (w2) at (0,8) [label={[xshift=1pt, yshift=-27pt]:$w_2$}] {};
 \node[vertex] (u21) at (-2,9) [label={[xshift=-15pt, yshift=-14pt]:$u_{21}$}] {};
 \node[vertex] (u22) at (-3,11) [label={[xshift=-14pt, yshift=-12pt]:$u_{22}$}] {};
 \node[vertex] (u23) at (-3,13) [label={[xshift=-15pt, yshift=-14pt]:$u_{23}$}] {};
 \node[vertex] (u24) at (-2,15) [label={[xshift=-17pt, yshift=-14pt]:$u_{24}$}] {};
 \node[vertex] (v21) at (2,9) [label={[xshift=15pt, yshift=-14pt]:$v_{21}$}] {};
 \node[vertex] (v22) at (3,11) [label={[xshift=14pt, yshift=-6pt]:$v_{22}$}] {};
 \node[vertex] (v23) at (3,13) [label={[xshift=14pt, yshift=-10pt]:$v_{23}$}] {};
 \node[vertex] (v24) at (2,15) [label={[xshift=14pt, yshift=-10pt]:$v_{24}$}] {};
 \node[vertex] (w3) at (0,16) [label={[xshift=0pt, yshift=-27pt]:$w_3$}] {};
 \node[vertex] (u31) at (-2,17) [label={[xshift=-14pt, yshift=-18pt]:$u_{31}$}] {};
 \node[vertex] (u32) at (-3,19) [label={[xshift=-15pt, yshift=-14pt]:$u_{32}$}] {};
 \node[vertex] (u33) at (-3,21) [label={[xshift=-15pt, yshift=-14pt]:$u_{33}$}] {};
 \node[vertex] (u34) at (-2,23) [label={[xshift=-15pt, yshift=-14pt]:$u_{34}$}] {};
 \node[vertex] (v31) at (2,17) [label={[xshift=16pt, yshift=-14pt]:$v_{31}$}] {};
 \node[vertex] (v32) at (3,19) [label={[xshift=15pt, yshift=-5pt]:$v_{32}$}] {};
 \node[vertex] (v33) at (3,21) [label={[xshift=17pt, yshift=-14pt]:$v_{33}$}] {};
 \node[vertex] (v34) at (2,23) [label={[xshift=17pt, yshift=-14pt]:$v_{34}$}] {};
 \node[vertex] (w4) at (0,24) [label={[xshift=0pt, yshift=-1pt]:$w_4$}] {};

 \node[vertex] (s1) at (-10,4) [label={[xshift=-17pt, yshift=-14pt]:$s_1$}] {};
 \node[vertex] (s2) at (-10,12) [label={[xshift=-17pt, yshift=-14pt]:$s_2$}] {};
 \node[vertex] (s3) at (-10,20) [label={[xshift=-17pt, yshift=-14pt]:$s_3$}] {};
 \node[vertex] (t1) at (10,4) [label={[xshift=17pt, yshift=-14pt]:$t_1$}] {};
 \node[vertex] (t2) at (10,12) [label={[xshift=17pt, yshift=-14pt]:$t_2$}] {};
 \node[vertex] (t3) at (10,20) [label={[xshift=17pt, yshift=-14pt]:$t_3$}] {};

 \begin{pgfonlayer}{background}
      \begin{scope}[opacity=.8,transparency group]
        \highlight{10pt}{black!20}{(s.center) to [bend left =0] (w1.center) to [bend left =0] (v11.center) to [bend left =0] (v12.center) to [bend left =0] (v13.center) to [bend left =0] (v14.center) to [bend left =0] (w2.center) to [bend left =0] (u21.center) to [bend left =0] (u22.center) to [bend left =0] (u23.center) to [bend left =0] (u24.center) to [bend left =0] (w3.center) to [bend left =0] (v31.center) to [bend left =0] (v32.center) to [bend left =0] (v33.center) to [bend left =0] (v34.center) to [bend left =0] (w4.center) to [bend left =0] (t.center)}
        \highlight{10pt}{black!20}{(s.center) to [bend left =0] (s1.center) to [bend left =0] (v11.center) to [bend left =0] (v12.center) to [bend right =0] (t1.center) to [bend left =10] (t.center)}
        \highlight{10pt}{black!20}{(s.center) to [bend left =8] (s2.center) to [bend left =10] (v31.center) to [bend left =0] (v32.center) to [bend right =0] (t2.center) to [bend left =0] (t.center)}
        \highlight{10pt}{black!20}{(s.center) to [bend left=0] (s3.center) to [bend right =20] (u21.center) to [bend left =0] (u22.center) to [bend right =10] (t3.center) to [bend left =0] (t.center)}
      \end{scope}
 \end{pgfonlayer}

 \path[edge_green] (s2) to [bend right=20] (u11);
 \draw[->,>={LaTeX[black,length=9pt, width=9pt]},draw opacity=0] (s2) to [bend right=20] (u11);

 \path[edge_red] (s) -- (w1);
 \draw[->,>={LaTeX[black,length=9pt, width=9pt]},draw opacity=0] (s) -- (w1);
 \path[edge_red] (w4) -- (t);
 \draw[->,>={LaTeX[black,length=9pt, width=9pt]},draw opacity=0] (w4) -- (t);
 \path[edge_green] (s) to [bend left=8] (s2);
 \draw[->,>={LaTeX[black,length=9pt, width=9pt]},draw opacity=0] (s) to [bend left=8] (s2);
 \path[edge_blue] (s) -- (s1);
 \draw[->,>={LaTeX[black,length=9pt, width=9pt]},draw opacity=0] (s) -- (s1);
 \path[edge_blue] (t1) to [bend left=10] (t);
 \draw[->,>={LaTeX[black,length=9pt, width=9pt]},draw opacity=0] (t1) to [bend left=10] (t);
 \path[edge_green] (t2) -- (t);
 \draw[->,>={LaTeX[black,length=9pt, width=9pt]},draw opacity=0] (t2) -- (t);
 \path[edge_yellow] (s) -- (s3);
 \draw[->,>={LaTeX[black,length=9pt, width=9pt]},draw opacity=0] (s) -- (s3);
 \path[edge_yellow] (t3) -- (t);
 \draw[->,>={LaTeX[black,length=9pt, width=9pt]},draw opacity=0] (t3) -- (t);

 \path[edge_blue] (s1) -- (v11);
 \draw[->,>={LaTeX[black,length=9pt, width=9pt]},draw opacity=0] (s1) -- (v11);
 \path[edge_blue] (v12) -- (t1);
 \draw[->,>={LaTeX[black,length=9pt, width=9pt]},draw opacity=0] (v12) -- (t1);
 \path[edge_blue] (s1) to [bend left=10] (v21);
 \draw[->,>={LaTeX[black,length=9pt, width=9pt]},draw opacity=0] (s1) to [bend left=10] (v21);
 \path[edge_blue] (v22) -- (t1);
 \draw[->,>={LaTeX[black,length=9pt, width=9pt]},draw opacity=0] (v22) -- (t1);

 \path[edge_green] (u12) to [bend right=10] (t2);
 \draw[->,>={LaTeX[black,length=9pt, width=9pt]},draw opacity=0] (u12) to [bend right=10] (t2);
 \path[edge_green] (s2) to [bend left=10] (v31);
 \draw[->,>={LaTeX[black,length=9pt, width=9pt]},draw opacity=0] (s2) to [bend left=10] (v31);
 \path[edge_green] (v32) -- (t2);
 \draw[->,>={LaTeX[black,length=9pt, width=9pt]},draw opacity=0] (v32) -- (t2);

 \path[edge_yellow] (s3) to [bend right=20] (u21);
 \draw[->,>={LaTeX[black,length=9pt, width=9pt]},draw opacity=0] (s3) to [bend right=20] (u21);
 \path[edge_yellow] (u22) to [bend right=10] (t3);
 \draw[->,>={LaTeX[black,length=9pt, width=9pt]},draw opacity=0] (u22) to [bend right=10] (t3);
 \path[edge_yellow] (s3) to [bend right=10] (u31);
 \draw[->,>={LaTeX[black,length=9pt, width=9pt]},draw opacity=0] (s3) to [bend right=10] (u31);
 \path[edge_yellow] (u32) to [bend left=5] (t3);
 \draw[->,>={LaTeX[black,length=9pt, width=9pt]},draw opacity=0] (u32) to [bend left=5] (t3);

 \path[edge_red] (w1) -- (u11);
 \draw[->,>={LaTeX[black,length=9pt, width=9pt]},draw opacity=0] (w1) -- (u11);
 \path[edge_red_and_green] (u11) -- (u12);
 \draw[->,>={LaTeX[black,length=9pt, width=9pt]},draw opacity=0] (u11) -- (u12);
 \path[edge_red] (u12) -- (u13);
 \draw[->,>={LaTeX[black,length=9pt, width=9pt]},draw opacity=0] (u12) -- (u13);
 \path[edge_red] (u13) -- (u14);
 \draw[->,>={LaTeX[black,length=9pt, width=9pt]},draw opacity=0] (u13) -- (u14);
 \path[edge_red] (u14) -- (w2);
 \draw[->,>={LaTeX[black,length=9pt, width=9pt]},draw opacity=0] (u14) -- (w2);
 \path[edge_red] (w1) -- (v11);
 \draw[->,>={LaTeX[black,length=9pt, width=9pt]},draw opacity=0] (w1) -- (v11);
 \path[edge_red_and_blue] (v11) -- (v12);
 \draw[->,>={LaTeX[black,length=9pt, width=9pt]},draw opacity=0] (v11) -- (v12);
 \path[edge_red] (v12) -- (v13);
 \draw[->,>={LaTeX[black,length=9pt, width=9pt]},draw opacity=0] (v12) -- (v13);
 \path[edge_red] (v13) -- (v14);
 \draw[->,>={LaTeX[black,length=9pt, width=9pt]},draw opacity=0] (v13) -- (v14);
 \path[edge_red] (v14) -- (w2);
 \draw[->,>={LaTeX[black,length=9pt, width=9pt]},draw opacity=0] (v14) -- (w2);
 \path[edge_red] (w2) -- (u21);
 \draw[->,>={LaTeX[black,length=9pt, width=9pt]},draw opacity=0] (w2) -- (u21);
 \path[edge_red_and_yellow] (u21) -- (u22);
 \draw[->,>={LaTeX[black,length=9pt, width=9pt]},draw opacity=0] (u21) -- (u22);
 \path[edge_red] (u22) -- (u23);
 \draw[->,>={LaTeX[black,length=9pt, width=9pt]},draw opacity=0] (u22) -- (u23);
 \path[edge_red] (u23) -- (u24);
 \draw[->,>={LaTeX[black,length=9pt, width=9pt]},draw opacity=0] (u23) -- (u24);
 \path[edge_red] (u24) -- (w3);
 \draw[->,>={LaTeX[black,length=9pt, width=9pt]},draw opacity=0] (u24) -- (w3);
 \path[edge_red] (w2) -- (v21);
 \draw[->,>={LaTeX[black,length=9pt, width=9pt]},draw opacity=0] (w2) -- (v21);
 \path[edge_red_and_blue] (v21) -- (v22);
 \draw[->,>={LaTeX[black,length=9pt, width=9pt]},draw opacity=0] (v21) -- (v22);
 \path[edge_red] (v22) -- (v23);
 \draw[->,>={LaTeX[black,length=9pt, width=9pt]},draw opacity=0] (v22) -- (v23);
 \path[edge_red] (v23) -- (v24);
 \draw[->,>={LaTeX[black,length=9pt, width=9pt]},draw opacity=0] (v23) -- (v24);
 \path[edge_red] (v24) -- (w3);
 \draw[->,>={LaTeX[black,length=9pt, width=9pt]},draw opacity=0] (v24) -- (w3);
 \path[edge_red] (w3) -- (u31);
 \draw[->,>={LaTeX[black,length=9pt, width=9pt]},draw opacity=0] (w3) -- (u31);
 \path[edge_red_and_yellow] (u31) -- (u32);
 \draw[->,>={LaTeX[black,length=9pt, width=9pt]},draw opacity=0] (u31) -- (u32);
 \path[edge_red] (u32) -- (u33);
 \draw[->,>={LaTeX[black,length=9pt, width=9pt]},draw opacity=0] (u32) -- (u33);
 \path[edge_red] (u33) -- (u34);
 \draw[->,>={LaTeX[black,length=9pt, width=9pt]},draw opacity=0] (u33) -- (u34);
 \path[edge_red] (u34) -- (w4);
 \draw[->,>={LaTeX[black,length=9pt, width=9pt]},draw opacity=0] (u34) -- (w4);
 \path[edge_red] (w3) -- (v31);
 \draw[->,>={LaTeX[black,length=9pt, width=9pt]},draw opacity=0] (w3) -- (v31);
 \path[edge_red_and_green] (v31) -- (v32);
 \draw[->,>={LaTeX[black,length=9pt, width=9pt]},draw opacity=0] (v31) -- (v32);
 \path[edge_red] (v32) -- (v33);
 \draw[->,>={LaTeX[black,length=9pt, width=9pt]},draw opacity=0] (v32) -- (v33);
 \path[edge_red] (v33) -- (v34);
 \draw[->,>={LaTeX[black,length=9pt, width=9pt]},draw opacity=0] (v33) -- (v34);
 \path[edge_red] (v34) -- (w4);
 \draw[->,>={LaTeX[black,length=9pt, width=9pt]},draw opacity=0] (v34) -- (w4);
\end{tikzpicture}
\caption{The directed acyclic graph constructed in the proof of Theorem~\ref{thm:exact_DAG_NPC} to the Boolean formula $\varphi = (x_1 \lor x_2) \land (\neg x_1 \lor x_3) \land (\neg x_2 \lor \neg x_3)$.
  The arcs marked with (blue) triangles, with (green) deltoids, with (yellow) pentagons, and with (red) rectangles correspond to the arcs in $A_1$, $A_2$, $A_3$, and $A_4$, respectively.
  The arcs with alternating markings belong to the corresponding intersections of these arc sets.
  The highlighted arcs form the subgraph $A'$ corresponding to the truth assignment $x_1 = x_3 = \texttt{true}$ and $x_2 = \texttt{false}$.}
\label{fig:construction_ExDagSP}
\end{figure}
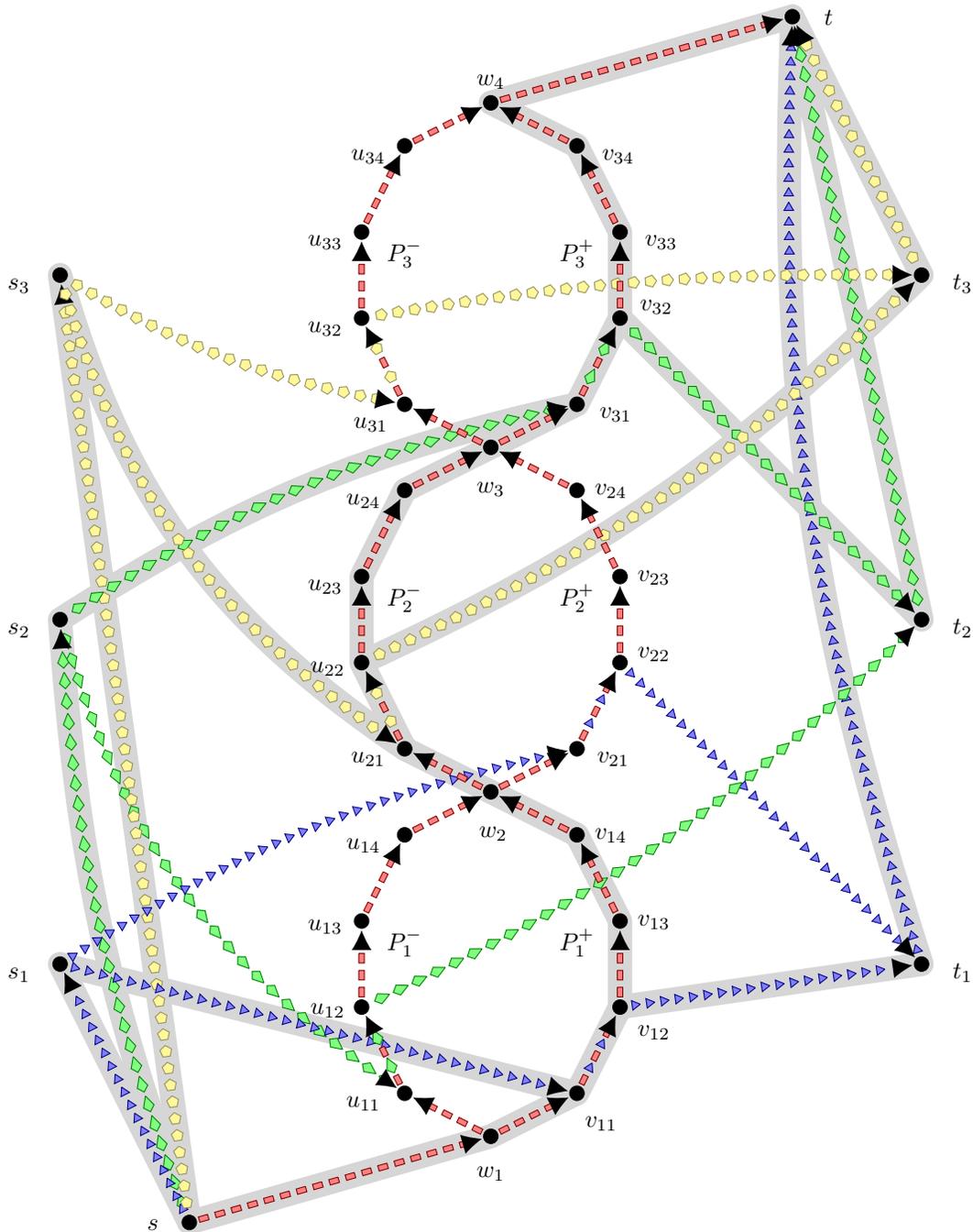

First, assume that $\varphi$ is satisfiable, and consider a truth assignment that satisfies $\varphi$.
For each $j \in \{ 1, \ldots, m \}$, let us define an index $i_j \in \{ 1, \ldots, n \}$ as follows: let it be the index of the variable corresponding to the first true literal occurring in $C_j$.
Now we construct an arc set $A'$.
Let $A'$ contain all the arcs incident to $s$ or $t$.
For each $i \in \{ 1, \ldots, n \}$, add the arcs of $P_i^+$ to $A'$ if $x_i$ is true, otherwise add the arcs of $P_i^-$ to $A'$.
For each $j \in \{ 1, \ldots, m \}$, add the unique arc going from the vertex $s_j$ to the vertices of $P_{i_j}^+ \cup P_{i_j}^-$, and add the unique arc going from the vertices of $P_{i_j}^+ \cup P_{i_j}^-$ to the vertex $t_j$ to $A'$.
It is not difficult to see that $A'$ is a feasible solution of \ExDirSPEx.

Now assume that there exists a feasible solution $A' \subseteq A$ of \ExDirSPEx in $D$.
Clearly, any $s$-$t$ dipath in $A_{m+1}$ contains either $P^+_i$ or $P^-_i$ for each $i \in \{ 1, \ldots, n \}$.
Set the value of $x_i$ to be true in the former case, and set its value to be false in the latter case.
By construction, this assignment satisfies each clause of $\varphi$.
Thus $\varphi$ is satisfiable.
\end{proof}

\begin{thm}\label{thm:superset_DAG_APXC}
The \SupDirSPMinCost problem does not have a polynomial-time $\big( (1 - \varepsilon) \ln k \big)$-approximation algorithm for any $\varepsilon > 0$, even for directed acyclic graphs with $c \equiv 1$, unless $\text{P} = \text{NP}$.
\end{thm}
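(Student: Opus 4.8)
The plan is to give an approximation-preserving reduction from the \textsc{SetCover} problem. Recall that, by the sharp inapproximability result for \textsc{SetCover} (Dinur and Steurer), there is no polynomial-time $\big((1-\varepsilon)\ln n\big)$-approximation algorithm for \textsc{SetCover} with ground set of size $n$, for any $\varepsilon>0$, unless $\mathrm{P}=\mathrm{NP}$. Since the number $k$ of color classes is allowed to grow here (unlike in Theorem~\ref{thm:superset_directed_APXC}), we can afford to let $k$ equal the size of the \textsc{SetCover} ground set, and thereby import the full $\ln k$ threshold.

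Concretely, given a \textsc{SetCover} instance with ground set $U=\{e_1,\dots,e_n\}$ and sets $S_1,\dots,S_m\subseteq U$ (we may assume each $S_j$ is nonempty and that $\bigcup_j S_j=U$), I would build the following \SupDirSPMinCost instance. Take two vertices $s$ and $t$, and for each $j\in\{1,\dots,m\}$ a vertex $z_j$ together with the two arcs $sz_j$ and $z_jt$; let $c\equiv 1$. The resulting digraph is acyclic, with topological order $s,z_1,\dots,z_m,t$. Introduce $k:=n$ color classes, where both $sz_j$ and $z_jt$ belong to $A_i$ precisely when $e_i\in S_j$. Since every $S_j$ is nonempty, the color classes cover the arc set, and the instance has size polynomial in that of the \textsc{SetCover} instance. (If one prefers to avoid the subdivision vertices $z_j$ and the model permits parallel arcs, one can instead use $m$ parallel $s$-$t$ arcs; the argument below then becomes even cleaner, with no factor-$2$ to track.)

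The core of the proof is the correspondence between feasible solutions and set covers. The only $s$-$t$ dipaths in the constructed digraph are the two-arc paths $(s,z_j,t)$, so a color class $A_i$ is served by an arc set $A'$ exactly when $A'$ contains both arcs incident to some $z_j$ with $e_i\in S_j$. Consequently, for $\mathcal{J}\subseteq\{1,\dots,m\}$ the arc set $\bigcup_{j\in\mathcal J}\{sz_j,z_jt\}$ is feasible iff $\{S_j:j\in\mathcal J\}$ covers $U$, and it has cost $2|\mathcal J|$. Conversely, any arc in $A'$ not forming such a pair is useless for every $s$-$t$ dipath, so making $A'$ inclusion-wise minimal (which is polynomial and does not increase the cost) leaves exactly a union of pairs $\{sz_j,z_jt\}$ whose sets cover $U$. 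Hence the optimum of the instance equals $2\cdot\mathrm{opt}_{\mathrm{SC}}$, and from any feasible $A'$ one extracts in polynomial time a cover of size at most $\lfloor|A'|/2\rfloor$. Therefore a polynomial-time $\big((1-\varepsilon)\ln k\big)$-approximation for \SupDirSPMinCost on DAGs with $c\equiv1$ would, on the constructed instance with $k=n$, produce an arc set of cost at most $\big((1-\varepsilon)\ln n\big)\cdot 2\,\mathrm{opt}_{\mathrm{SC}}$ and hence a cover of size at most $\big((1-\varepsilon)\ln n\big)\cdot\mathrm{opt}_{\mathrm{SC}}$, contradicting the \textsc{SetCover} hardness unless $\mathrm{P}=\mathrm{NP}$.

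The reduction is essentially immediate, so I do not expect a serious technical obstacle; the only points that need care are keeping the construction a genuine DAG, and verifying that the factor-$2$ blow-up in the objective cancels \emph{exactly} against the two-arc gadgets, so that the approximation ratio is transferred without any constant loss — a merely constant-factor-preserving reduction would be useless against the sharp $\ln k$ threshold, so this bookkeeping is the one place where one must be precise.
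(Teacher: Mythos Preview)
Your proposal is correct and takes essentially the same approach as the paper: a reduction from \textsc{SetCover} with one color class per ground-set element and one $s$-$t$ route per set, invoking the Dinur--Steurer $\ln n$ threshold. The paper in fact uses exactly the parallel-arc variant you mention parenthetically (so the factor-$2$ bookkeeping disappears), but your subdivided version is equally valid and your analysis of the ratio transfer is sound.
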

\begin{proof}
We give a polynomial-time reduction from the \textsc{SetCover} problem to \SupDirSPMinCost.
The input of \textsc{SetCover} is a finite ground set $U$ and a family $\mathcal{F}$ of some subsets of $U$, and our goal is to find a subfamily $\mathcal{C} \subseteq \mathcal{F}$ of minimum cardinality such that $\bigcup_{C \in \mathcal{C}} C = U$.
It was shown in~\cite{Dinur2014Analytical} that \textsc{SetCover} is NP-hard to approximate within factor $(1 - \varepsilon) \ln |U|$ for any $\varepsilon > 0$.
Now we show that if there were a polynomial-time $\big( (1 - \varepsilon) \ln k \big)$-approximation algorithm for the \SupDirSPMinCost problem for some $\varepsilon > 0$, then we would obtain a polynomial-time $\big( (1 - \varepsilon) \ln |U| \big)$-approximation algorithm for \textsc{SetCover}, implying $\text{P} = \text{NP}$.

Let $(U, \mathcal{F})$ be an instance of the \textsc{SetCover} problem.
Let $u_1, \ldots, u_k$ denote the elements of $U$, i.e.\ $U = \{ u_1, \ldots, u_k \}$.
We define a directed acyclic graph $D=(V,A)$ as follows.
Take two distinct vertices $s$ and $t$.
For each $F \in \mathcal{F}$, add an arc $a_F = st$ to the digraph.
Let $A_i = \{ a_F : F \in \mathcal{F}, u_i \in F \}$ for any $u_i \in U$.
For an example, see Figure~\ref{fig:construction_SupDagSP}.

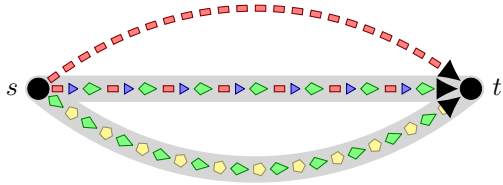
\begin{figure}[!ht]
\centering
\begin{tikzpicture}[scale=1.15]
 \tikzstyle{vertex}=[draw,circle,fill,minimum size=8,inner sep=0]

 \tikzset{edge_blue/.style={postaction={decorate, decoration={markings, mark=between positions 3pt and 1-10pt step 6pt with {\draw[blue!50!black,thin,fill=blue!50] (0:0.08) -- (120:0.08) -- (240:0.08) -- (0:0.08);}}}}}

 \tikzset{edge_red/.style={postaction={decorate, decoration={markings, mark=between positions 3pt and 1-8pt step 7pt with {\draw[red!50!black,thin,fill=red!50] (30:0.08) -- (150:0.08) -- (210:0.08) -- (330:0.08) -- (30:0.08);}}}}}

 \tikzset{edge_green/.style={postaction={decorate, decoration={markings, mark=between positions 5pt and 1-8pt step 8.5pt with {\draw[green!50!black,thin,fill=green!50] (0:0.16) -- (90:0.08) -- (180:0.08) -- (270:0.08) -- (0:0.16);}}}}}

 \tikzset{edge_yellow/.style={postaction={decorate, decoration={markings, mark=between positions 5pt and 1-8pt step 7pt with {\draw[yellow!50!black,thin,fill=yellow!50] (0:0.09) -- (72:0.09) -- (144:0.09) -- (216:0.09) -- (288:0.09) -- (0:0.09);}}}}}

 \tikzset{edge_red_and_blue_and_green/.style={postaction={decorate, decoration={markings, mark=between positions 3pt and 1-8pt step 21pt with {\draw[red!50!black,thin,fill=red!50] (30:0.08) -- (150:0.08) -- (210:0.08) -- (330:0.08) -- (30:0.08);}}}, postaction={decorate, decoration={markings, mark=between positions 8.5pt and 1-8pt step 21pt with {\draw[blue!50!black,thin,fill=blue!50] (0:0.08) -- (120:0.08) -- (240:0.08) -- (0:0.08);}}}, postaction={decorate, decoration={markings, mark=between positions 15pt and 1-8pt step 21pt with {\draw[green!50!black,thin,fill=green!50] (0:0.16) -- (90:0.08) -- (180:0.08) -- (270:0.08) -- (0:0.16);}}}}}

 \tikzset{edge_green_and_yellow/.style={postaction={decorate, decoration={markings, mark=between positions 3pt and 1-8pt step 15pt with {\draw[green!50!black,thin,fill=green!50] (0:0.16) -- (90:0.08) -- (180:0.08) -- (270:0.08) -- (0:0.16);}}}, postaction={decorate, decoration={markings, mark=between positions 11.5pt and 1-8pt step 15pt with {\draw[yellow!50!black,thin,fill=yellow!50] (0:0.09) -- (72:0.09) -- (144:0.09) -- (216:0.09) -- (288:0.09) -- (0:0.09);}}}}}

 \node[vertex] (s) at (0,0) [label={[xshift=-10pt, yshift=-10pt]:$s$}] {};
 \node[vertex] (t) at (5,0) [label={[xshift=10pt, yshift=-10pt]:$t$}] {};

 \begin{pgfonlayer}{background}
      \begin{scope}[opacity=.8,transparency group]
        \highlight{10pt}{black!20}{(s.center) -- (t.center)}
        \highlight{10pt}{black!20}{(s.center) to [bend right = 39.675] (t.center)}
      \end{scope}
 \end{pgfonlayer}

 \path[edge_red] (s) to [bend left = 37.5] (t);
 \draw[->,>={LaTeX[black,length=9pt, width=9pt]},draw opacity=0] (s) to [bend left = 37.5] (t);

 \path[edge_red_and_blue_and_green] (s) -- (t);
 \draw[->,>={LaTeX[black,length=9pt, width=9pt]},draw opacity=0] (s) -- (t);

 \path[edge_green_and_yellow] (s) to [bend right = 37.5] (t);
 \draw[->,>={LaTeX[black,length=9pt, width=9pt]},draw opacity=0] (s) to [bend right = 37.5] (t);
\end{tikzpicture}
\caption{The directed acyclic graph constructed in the proof of Theorem~\ref{thm:superset_DAG_APXC} to the family $\mathcal{S} = \big\{ \{ u_1 \}, \{ u_1, u_2, u_3 \}, \{ u_3, u_4 \} \big\}$.
The colors (red) rectangles, (blue) triangles, (green) deltoids, and (yellow) pentagons correspond to the arcs in $A_1$, $A_2$, $A_3$, and $A_4$, respectively.
Thus, for example, the middle arc, marked with alternating (red) rectangles, (blue) triangles, and with (green) deltoids, belongs to $A_1 \cap A_2 \cap A_3$.
The highlighted arcs form the subgraph $A'$ corresponding to the set cover $\mathcal{C} = \big\{ \{ u_1, u_2, u_3 \}, \{ u_3, u_4 \} \big\}$.}
\label{fig:construction_SupDagSP}
\end{figure}

Let $m^*$ denote the minimum number of sets in $\mathcal{F}$ that cover all the elements of $U$, and let $A^* \subseteq A$ be an optimal solution of \SupDirSPMinCost in $D$.

Now we show that $m^* \ge |A^*|$.
Consider a minimum set cover $\mathcal{C}^* \subseteq \mathcal{F}$ of $U$, and let $A' = \{ a_F : F \in \mathcal{C}^* \}$.
Clearly, $A'$ is a feasible solution of \SupDirSPMinCost, and $m^* = |\mathcal{C}^*| = |A'| \ge |A^*|$.

Next, let $A'$ be an arbitrary feasible solution of \SupDirSPMinCost in the constructed directed acyclic graph $D$.
Then $\mathcal{C}' = \big\{ F \in \mathcal{F} : a_F \in A' \big\}$ is a set cover with $|\mathcal{C}'| = |A'|$.

We are ready to show that for any $\varepsilon > 0$ and for any feasible solution $A'$ of \SupDirSPMinCost with $|A'| \le \big( (1 - \varepsilon) \ln k \big) |A^*|$, the set cover $\mathcal{C}' = \big\{ F \in \mathcal{F} : a_F \in A' \big\}$ has size at most $\big( (1 - \varepsilon) \ln |U| \big) m^*$.

By the above calculations, we have
\[ |\mathcal{C}'| = |A'| \le \big( (1 - \varepsilon) \ln k \big) |A^*| \le \big( (1 - \varepsilon) \ln |U| \big) m^*.\]

Hence we obtained a $\big( (1 - \varepsilon) \ln |U| \big)$-approximation algorithm for \textsc{SetCover}, which is better than the inapproximability threshold.
This completes the proof of the theorem.
\end{proof}

\subsection{Undirected variants}

Similarly to Theorem~\ref{thm:exact_directed_NPC}, now we show that the following decision version of the {\ExUndSPMinCost} problem in which one is to decide whether a feasible solution exists is NP-complete.

\smallskip

\noindent {\bf \ExUndSPEx} \\
\textit{Input:} a positive integer $k$, a graph~$G=(V,E)$ with $E = E_1 \cup \cdots \cup E_k$, and two distinct vertices $s, t \in V$.\\
\textit{Problem:} decide whether there exist $E' \subseteq E$ such that $E' \cap E_i$ is an $s$-$t$ path for all $i \in \{ 1, \ldots, k \}$.

\medskip

\begin{thm} \label{thm:exact_undirected_NPC}
The {\ExUndSPEx} problem is NP-complete even for $k=2$.
\end{thm}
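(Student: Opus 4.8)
The plan is to reduce from the \textsc{3Sat3} problem~\cite{Tovey1984Simplified}; since \ExUndSPEx is clearly in NP, only NP-hardness needs an argument. As in the proof of Theorem~\ref{thm:exact_DAG_NPC} we may assume that every variable of the input formula $\varphi = C_1 \wedge \dots \wedge C_m$ occurs both negated and non-negated (so at most twice with any fixed sign). The construction parallels that proof, but now the \emph{two} color classes $E_1$ and $E_2$ play the roles of the ``variable'' and the ``clause'' color, respectively, and every gadget is undirected.

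For each occurrence $\ell$ of a variable $x_i$ in a clause $C_j$ I would introduce a single \emph{slot edge} $p_\ell$ together with its two endpoints; the slot edges will be exactly $E_1 \cap E_2$, and every other edge will lie in exactly one of the two classes. The subgraph spanned by $E_1$ is a chain of $n$ \emph{variable gadgets} glued in series, the $i$-th one stretching between vertices $a_i$ and $b_i$ with $b_i$ identified with $a_{i+1}$, and with $s$ joined to $a_1$ and $b_n$ to $t$: gadget $i$ consists of two internally vertex-disjoint $a_i$-$b_i$ branches, a \emph{true} and a \emph{false} branch, each being a series of \emph{diamonds}, one per occurrence of the literal $x_i$ (resp.\ $\neg x_i$), where a diamond offers the choice between a plain edge and a detour traversing the corresponding slot edge. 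All non-slot edges here lie in $E_1$ only. Symmetrically, the subgraph spanned by $E_2$ is a chain of \emph{clause gadgets} glued in series along cut vertices $c_1, \dots, c_{m+1}$, with $s$ joined to $c_1$ and $c_{m+1}$ to $t$: between $c_j$ and $c_{j+1}$ there is, for each of the at most three literals $\ell$ of $C_j$, a path of length three from $c_j$ to $c_{j+1}$ whose middle edge is $p_\ell$. All non-slot edges here lie in $E_2$ only. The instance has size linear in $n+m$.

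I would then record two facts. First, for any feasible $E'$ one may assume $E' = P_1 \cup P_2$, where $P_i := E' \cap E_i$ is an $s$-$t$ path in $(V,E_i)$, and in that case $E(P_1) \cap (E_1 \cap E_2) = E(P_2) \cap (E_1 \cap E_2)$, i.e.\ the two paths use exactly the same set of slot edges. Second, using that the $a_i$ and $b_i$ (resp.\ the $c_j$) are cut vertices, any $s$-$t$ path in $(V,E_1)$ traverses the variable gadgets in order, picks a branch in each (a truth value of $x_i$) and, within that branch, an arbitrary subset of its slot edges --- so the slot edges it uses form an arbitrary set that is \emph{consistent}, meaning it never contains a slot of both $x_i$ and $\neg x_i$; whereas any $s$-$t$ path in $(V,E_2)$ visits $c_1,\dots,c_{m+1}$ in order and uses exactly one slot edge per clause. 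Combining these: from a feasible $E'$, the common slot set $X$ of $P_1$ and $P_2$ selects exactly one literal per clause and is consistent, so setting each variable to the value of the branch used by $P_1$ (arbitrarily when $X$ contains no slot of it) satisfies $\varphi$; conversely, from a satisfying assignment pick one true literal per clause, route $P_2$ through exactly those $m$ slot edges, route $P_1$ through each variable gadget along the branch matching that variable's value while activating exactly the diamonds of the chosen slots, and set $E' := E(P_1) \cup E(P_2)$; a short computation then gives $E' \cap E_1 = E(P_1)$ and $E' \cap E_2 = E(P_2)$, so $E'$ is feasible.

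The step I expect to be delicate is engineering the variable gadget so that the ``exact'' requirement is actually meetable: a naive gadget forcing a path to use \emph{every} slot on its chosen branch would make $P_1$ use more than $m$ slot edges whenever some clause has two true literals, and then it could never match the exactly-$m$ slots forced on $P_2$; the skip-or-detour diamonds, which let $P_1$ realize every consistent subset of slots, are what make the reduction work. A secondary point requiring care is the cut-vertex bookkeeping --- verifying that in each of $(V,E_1)$ and $(V,E_2)$ the internal vertices of distinct gadgets are non-adjacent, so that the undirected paths cannot shortcut between gadgets --- together with checking feasibility in both directions of the correspondence.
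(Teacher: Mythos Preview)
Your argument is correct and follows the same blueprint as the paper's: reduce from \textsc{3Sat3}, build a chain of variable gadgets in colour~$1$ and a chain of clause gadgets in colour~$2$, and make the shared slot edges the only elements of $E_1\cap E_2$, so that feasibility forces $P_1$ and $P_2$ to use identical slot sets. The paper's proof is shorter because it simply forgets the orientation of the digraph constructed for Theorem~\ref{thm:superset_directed_APXC} and appeals to Claims~\ref{cl:SdspToSat} and~\ref{cl:SatToSdsp} from that proof.

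The substantive difference is precisely the point you single out as delicate. The paper's variable gadget is a bare path $P_i^{\pm}$, so $P_1$ is \emph{forced} to traverse every slot edge on the chosen branch; in that construction each branch contributes two slot edges to $E_1\cap E_2$ regardless of how many occurrences the literal actually has, so $P_1$ picks up $2n$ slots while $P_2$ uses exactly $m$. Then $A'\cap E_2$ equals $P_2$ together with the leftover slots of $P_1$ (e.g.\ $v_{i3}v_{i4}$ when $x_i$ has only one positive occurrence), and this is not a single $s$-$t$ path --- so the $A'$ borrowed from Claim~\ref{cl:SdspToSat} is in fact \emph{not} feasible for \ExUndSPEx as written. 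Your skip-or-detour diamonds fix exactly this: they let $P_1$ realise any consistent subset of slots, so one can match $P_1$'s slot set to the $m$ slots chosen by $P_2$. The remaining bookkeeping in your proposal (cut-vertex structure of both colour subgraphs, the slot-matching identity $E(P_1)\cap(E_1\cap E_2)=E(P_2)\cap(E_1\cap E_2)$, and both directions of the correspondence) is routine and correct.
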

\begin{proof}
Clearly, the problem is in NP.
We give a reduction from the NP-complete 3SAT3 problem~\cite{Tovey1984Simplified} to \ExUndSPEx.
Our goal in the 3SAT3 problem is to decide whether a Boolean formula in conjunctive normal form where each clause has size at most 3 and each variable occurs at most 3 times is satisfiable.

Let $\varphi = C_1 \wedge \cdots \wedge C_m$ be an instance of 3SAT3, where $C_1,\ldots,C_m$ are the clauses of $\varphi$, and let $x_1,\ldots,x_n$ denote its variables.
Without loss of generality, we assume that every variable appears both negated and non-negated, and also that each variable appears at most once per clause.

Consider the graph $G=(V,E)$ obtained from the digraph $D$ constructed in the proof of Theorem~\ref{thm:superset_directed_APXC} by forgetting its orientation.
Note that the construction of $D$ given in Theorem~\ref{thm:superset_directed_APXC} works regardless of the sizes of the clauses.

First, assume that $\varphi$ is satisfiable.
Then note that the edge set corresponding to $A'$ constructed in the proof of Claim~\ref{cl:SdspToSat} is a feasible solution to {\ExUndSPEx}.

Now assume that there exists $E' \subseteq E$ such that $E' \cap E_1$ and $E' \cap E_2$ both form an $s$-$t$ path.
Then note that the truth assignment constructed to the corresponding arc set $A'$ in the proof of Claim~\ref{cl:SatToSdsp} satisfies~$\varphi$.
\end{proof}

This immediately implies that the optimization counterpart cannot be approximated unless $\text{P} = \text{NP}$.

\begin{thm}\label{thm:exact_undirected_inapprox}
There exists no polynomial-time $\alpha$-approximation algorithm for the \ExUndSPMinCost problem for any $\alpha \geq 1$ unless $\text{P} = \text{NP}$, even if $k=2$.
\end{thm}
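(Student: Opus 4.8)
The plan is to reduce from the \ExUndSPEx problem, which is NP-complete for $k=2$ by Theorem~\ref{thm:exact_undirected_NPC}, using the same ``gap at zero'' gadget as in the proof of Theorem~\ref{thm:exact_directed_inapprox}. Given an instance of \ExUndSPEx consisting of a graph $G=(V,E)$ with $E = E_1 \cup E_2$ and distinct terminals $s,t$, I would construct an instance of \ExUndSPMinCost as follows: keep all original edges but assign each of them cost $0$, and add two parallel edges $e_1,e_2$ joining $s$ and $t$, each of cost $1$, setting $E_1' = E_1 \cup \{e_1\}$ and $E_2' = E_2 \cup \{e_2\}$. (If the model forbids parallel edges, replace each $e_i$ by a path of two edges through a fresh vertex, exactly one of which has cost $1$; this changes nothing.) This is a polynomial-size instance with a nonnegative cost function and still has $k=2$ color classes.

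First I would note that the constructed instance always admits a feasible solution, namely $\{e_1,e_2\}$, so any approximation algorithm is guaranteed to output something on it. Next I would show that the optimum equals $0$ if and only if the original \ExUndSPEx instance is feasible. Indeed, if $E'$ is a feasible solution to \ExUndSPEx, then $E' \subseteq E$ has cost $0$ and $E' \cap E_i' = E' \cap E_i$ is an $s$-$t$ path for $i\in\{1,2\}$, so $E'$ is a feasible (hence optimal) solution of cost $0$ in the new instance. Conversely, any feasible solution of cost $0$ in the new instance contains no edge of positive cost, so it is contained in $E$, and it is therefore a feasible solution to \ExUndSPEx. Since all edge costs are integral, whenever the original instance is infeasible the optimum is at least $1$.

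Finally I would conclude as follows: given an $\alpha$-approximation algorithm for \ExUndSPMinCost with $\alpha \ge 1$, run it on the constructed instance; it returns a feasible solution of cost at most $\alpha \cdot \mathrm{OPT}$, which is $0$ exactly when $\mathrm{OPT}=0$, i.e.\ exactly when the \ExUndSPEx instance is feasible. This would decide \ExUndSPEx in polynomial time, contradicting Theorem~\ref{thm:exact_undirected_NPC} unless $\text{P}=\text{NP}$. I do not anticipate a real obstacle: this is a routine ``introduce a multiplicative gap at $0$'' argument, structurally identical to the one behind Theorem~\ref{thm:exact_directed_inapprox}. The only points that need a moment's care are verifying that a feasible solution always exists (so the approximation guarantee is meaningful on this instance) and, if the simple-graph convention is in force, handling the parallel edges via the subdivision remark above.
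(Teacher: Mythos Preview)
Your proposal is correct and takes essentially the same approach as the paper: the paper's proof of Theorem~\ref{thm:exact_undirected_inapprox} simply says it is analogous to that of Theorem~\ref{thm:exact_directed_inapprox}, invoking Theorem~\ref{thm:exact_undirected_NPC} in place of Theorem~\ref{thm:exact_directed_NPC}, which is exactly the ``gap at zero'' gadget you describe.
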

\begin{proof}
The proof is analogous to that of Theorem~\ref{thm:exact_directed_inapprox}, relying on Theorem~\ref{thm:exact_undirected_NPC} instead of Theorem~\ref{thm:exact_directed_NPC}.
\end{proof}

Analogously to the case of digraphs, we can obtain a polynomial-time $k$-approximation algorithm for \SupUndSPMinCost by finding a shortest $s$-$t$ path in each color class.
Moreover, we obtain the following inapproximability results.

\begin{thm} \label{thm:super_undirected_APXC}
The \SupUndSPMinCost problem is APX-complete even for $c \equiv 1$ and $k=2$.
\end{thm}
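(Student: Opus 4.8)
The plan is to follow the proof of Theorem~\ref{thm:superset_directed_APXC} essentially verbatim, but on the \emph{undirected} graph $G=(V,E)$ obtained from the digraph $D$ of that proof by forgetting all orientations and keeping the two color classes $E_1,E_2$ induced by $A_1,A_2$ --- exactly the device already used for the exact variant in Theorem~\ref{thm:exact_undirected_NPC}. Membership in APX is immediate: for $k=2$, taking a shortest $s$-$t$ path in each of the two color classes and outputting their union is a polynomial-time $2$-approximation. For APX-hardness I would again reduce from MAX-2SAT3, reusing the construction of Theorem~\ref{thm:superset_directed_APXC}, and re-derive the two counting estimates of Claims~\ref{cl:SdspToSat} and~\ref{cl:SatToSdsp}: an assignment satisfying $m_{\mathrm{s}}^*$ clauses yields a feasible $E'$ with $|E'|\le(5n+2m+4)+(m-m_{\mathrm{s}}^*)$, and every inclusion-wise minimal feasible $E'$ induces an assignment leaving at most $|E'|-(5n+2m+4)$ clauses unsatisfied. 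The concluding chain of inequalities that turns a $(1+\varepsilon)$-approximation for \SupUndSPMinCost into a $(1-17\varepsilon)$-approximation for MAX-2SAT3 is then identical to the directed case, so a $(34205/34204-\varepsilon)$-approximation would contradict the MAX-2SAT3 inapproximability result of~\cite{Berman1999Some}.

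The only genuinely new point --- and the main obstacle --- is to confirm that forgetting orientation does not introduce $s$-$t$ paths in $E_1$ or $E_2$ beyond those that existed as dipaths, since the directed argument rests on the rigid shape of those paths. I would verify this by a degree count in $G_1=(V,E_1)$ and $G_2=(V,E_2)$. In $G_1$, every vertex $v_{ij}$, $u_{ij}$ has degree $2$ and is internal to one of the length-$5$ paths $P_i^+$ or $P_i^-$, while $s$, $t$, and $w_1,\dots,w_{n+1}$ form a path $s - w_1 - \cdots - w_{n+1} - t$ in which each consecutive pair $w_i, w_{i+1}$ is joined only by the two internally disjoint paths $P_i^+$ and $P_i^-$; hence an undirected $s$-$t$ path in $G_1$ must still traverse $w_1,\dots,w_{n+1}$ in order and use exactly one of $P_i^+$, $P_i^-$ for each $i$, giving the same family of paths as in $D$. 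In $G_2$, the clause vertices $c_1,\dots,c_{m+1}$ again sit in series, consecutive ones linked only by length-$3$ detours $c_j - v_{i1} - v_{i2} - c_{j+1}$ (or the $u$-analogues, or via $v_{i3}v_{i4}$, $u_{i3}u_{i4}$) whose interior vertices have degree $2$, so every undirected $s$-$t$ path in $G_2$ still passes through all of $c_1,\dots,c_{m+1}$ in order, choosing exactly one detour per clause.

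Given this structural fact, the rest transfers without change. For an inclusion-wise minimal feasible $E'$ one obtains that $E'\cap E_1$ and $E'\cap E_2$ each contain a \emph{unique} $s$-$t$ path, $P_1$ and $P_2$, with $E'=P_1\cup P_2$ (the short edges that $P_2$ adds to $E'\cap E_1$ outside $P_1$ sit in isolated two-vertex components, so they spoil neither uniqueness nor the counts); the induced assignment, setting $x_i$ true iff $P_1$ uses $P_i^+$, fails to satisfy $C_j$ exactly when the $c_j$-to-$c_{j+1}$ subpath of $P_2$ is disjoint from $P_1$; and the cardinalities $|P_1\cap(E_1\cap E_2)|=2n$ and $|E'\setminus(E_1\cap E_2)|=3n+2m+4$ are unchanged, so $m-m'_{\mathrm{s}}=|E'|-(5n+2m+4)$ as before. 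The auxiliary bounds $2n\le 2m$ and $m\le 2m_{\mathrm{s}}^*$ are likewise unaffected by the loss of orientation, so the approximation-preserving reduction goes through verbatim, establishing APX-completeness of \SupUndSPMinCost even for $c\equiv 1$ and $k=2$.
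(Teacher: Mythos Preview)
Your proposal is correct and follows the same approach as the paper: the paper's own proof of Theorem~\ref{thm:super_undirected_APXC} is the single sentence ``the proof is analogous to that of Theorem~\ref{thm:superset_directed_APXC}, where only the orientation of the construction needs to be forgotten.'' You have simply filled in the details the paper omits, in particular the structural verification that in the underlying undirected graph the $s$-$t$ paths in $G_1$ and $G_2$ coincide with the directed ones, which is exactly the point one must check.
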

\begin{proof}
The proof is analogous to that of Theorem~\ref{thm:superset_directed_APXC}, where only the orientation of the construction needs to be forgotten.
\end{proof}

\begin{thm}\label{thm:superset_undirected_inapprox}
The \SupUndSPMinCost problem does not have a polynomial-time $\big( (1 - \varepsilon) \ln k \big)$-approximation algorithm for any $\varepsilon > 0$, even for $c \equiv 1$ unless $\text{P} = \text{NP}$.
\end{thm}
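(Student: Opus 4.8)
The plan is to mirror the proof of Theorem~\ref{thm:superset_DAG_APXC} almost verbatim, replacing the directed acyclic construction by the corresponding undirected multigraph and invoking the same \textsc{SetCover} hardness. Concretely, by~\cite{Dinur2014Analytical} it suffices to convert a hypothetical polynomial-time $\big((1-\varepsilon)\ln k\big)$-approximation for \SupUndSPMinCost into a polynomial-time $\big((1-\varepsilon)\ln|U|\big)$-approximation for \textsc{SetCover}.

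First I would set up the reduction. Given an instance $(U,\mathcal{F})$ with $U=\{u_1,\dots,u_k\}$, introduce two distinct vertices $s$ and $t$, and for every $F\in\mathcal{F}$ an edge $e_F$ joining $s$ and $t$ of unit cost; put $E_i=\{e_F : F\in\mathcal{F},\ u_i\in F\}$ for each $i\in\{1,\dots,k\}$, so that $E=E_1\cup\cdots\cup E_k$ and $c\equiv 1$. The instance clearly has polynomial size, and $k=|U|$.

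The key step --- and the only place where passing from digraphs to graphs could conceivably matter --- is the characterization of feasibility. Since $s\neq t$ and the graph has no vertices besides $s$ and $t$, every $s$-$t$ path is a single edge; in particular no path can mix edges of several colors. Hence for $E'\subseteq E$, the set $E'\cap E_i$ contains an $s$-$t$ path if and only if $E'$ contains some $e_F$ with $u_i\in F$. Consequently $E'$ is feasible for \SupUndSPMinCost exactly when $\{F : e_F\in E'\}$ is a set cover of $U$, and $|E'|$ equals the number of sets chosen. This gives a cost-preserving correspondence between feasible solutions and set covers, exactly as in Theorem~\ref{thm:superset_DAG_APXC}.

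Finally I would transfer the guarantees: if $E^*$ is an optimal solution of \SupUndSPMinCost and $m^*$ is the minimum set-cover size, then $m^*=|E^*|$, and any solution $E'$ with $|E'|\le\big((1-\varepsilon)\ln k\big)|E^*|$ yields the cover $\mathcal{C}'=\{F : e_F\in E'\}$ of size $|\mathcal{C}'|=|E'|\le\big((1-\varepsilon)\ln k\big)m^*=\big((1-\varepsilon)\ln|U|\big)m^*$, beating the threshold of~\cite{Dinur2014Analytical} and hence forcing $\text{P}=\text{NP}$. I expect no real obstacle here; the only thing that genuinely needs checking is the path characterization in the previous paragraph, and, if one insists on a simple graph rather than a multigraph, one can subdivide each $e_F$ once with the new degree-$2$ vertex and both resulting edges inheriting the colors of $e_F$, which merely scales every feasible value by a factor of $2$ and leaves the reduction intact.
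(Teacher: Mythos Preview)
Your proposal is correct and follows essentially the same approach as the paper, which simply states that the proof is analogous to that of Theorem~\ref{thm:superset_DAG_APXC} with the orientation of the construction forgotten. Your write-up is in fact more explicit than the paper's one-line appeal, and the optional subdivision remark for simple graphs is a harmless addition.
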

\begin{proof}
The proof is analogous to that of Theorem~\ref{thm:superset_DAG_APXC}, where only the orientation of the construction needs to be forgotten.
\end{proof}

\section{Tractable cases}

In this section, we consider tractable special cases of our problems.
First, we present polynomial-time algorithms for the problems \ExDirSPMinCost and the \SupDirSPMinCost in directed acyclic graphs when $k$ is a fixed constant.
We give dynamic programming algorithms, which resemble the approach for the problem of finding $k$ disjoint paths when $k$ is a constant~\cite[page~1244]{Schrijver2003Combinatorial}, which was based on~\cite{Fortune1980Directed}.

Now, we consider the \ExDirSPMinCost problem for directed acyclic graphs.

\begin{thm}\label{thm:exact_dac_constantk}
The {\ExDirSPMinCost} problem can be solved in polynomial time for directed acyclic graphs when $k$ is a constant.
\end{thm}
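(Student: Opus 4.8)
The plan is to translate the problem into a shortest-path computation in a product graph, in the spirit of the dynamic program for $k$ disjoint paths in DAGs mentioned above. First I would record a clean combinatorial description of feasible solutions: $A'$ is feasible if and only if $A' = P_1 \cup \dots \cup P_k$, where each $P_i$ is (the arc set of) a simple $s$--$t$ dipath with $P_i \subseteq A_i$, subject to the \emph{consistency condition} that whenever an arc $a$ lies on some $P_i$ and $a \in A_j$, then $a$ also lies on $P_j$. Indeed, from a feasible $A'$ one recovers $P_i = A' \cap A_i$ and checks consistency directly; conversely, for a consistent tuple $(P_1,\dots,P_k)$ one verifies that $A' \cap A_i = P_i$, so $A' \cap A_i$ is exactly an $s$--$t$ dipath. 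After discarding vertices that are not reachable from $s$ or cannot reach $t$ (any feasible dipath avoids them), $s$ becomes a source and $t$ a sink, and I fix a topological order $v_1 = s, \ldots, v_n = t$.

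Next I would build the product digraph $H$ on vertex set $V^k$, a tuple being interpreted as the current heads of the $k$ dipaths. For each arc $a = (u,w) \in A$ with color set $\chi(a) = \{\, j : a \in A_j \,\}$, $H$ has, from every tuple $\mathbf q$ with $q_j = u$ for all $j \in \chi(a)$, an arc of cost $c(a)$ to the tuple obtained from $\mathbf q$ by setting $q_j := w$ for all $j \in \chi(a)$ and leaving the remaining coordinates unchanged; that is, each move advances, over a single arc of $D$, \emph{all} colors owning that arc at once. Since every move strictly increases the sum of the topological indices of the coordinates, $H$ is itself a DAG; it has $n^k$ vertices and at most $n^k \cdot |A|$ arcs, which is polynomial for constant $k$. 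I then claim that the minimum cost of a feasible solution to \ExDirSPMinCost equals the minimum cost of a dipath from $(s,\dots,s)$ to $(t,\dots,t)$ in $H$, and that infeasibility corresponds to unreachability; this minimum can be computed in linear time in the size of $H$ by dynamic programming along its topological order, which also accommodates arbitrary, possibly negative, arc costs since $D$ is acyclic.

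The remaining work is the correctness of this equivalence. For soundness, a dipath in $H$ makes each coordinate $i$ trace a walk from $s$ to $t$ using only arcs of $A_i$; in a DAG a walk is a simple path, so this yields dipaths $P_i$, the move structure enforces exactly the consistency condition, and no arc of $D$ can be the subject of two moves along the path (that would force a coordinate's walk to revisit a vertex), so the $H$-dipath has cost $\sum_{a \in P_1 \cup \dots \cup P_k} c(a) = c(A')$. For completeness, given a feasible tuple $(P_1,\dots,P_k)$ I would schedule the moves by repeatedly advancing the colors sitting at the topologically smallest current head: if $u$ is that head and some $P_i$ leaves $u$ via $a = (u,\cdot)$, then by consistency every $j \in \chi(a)$ uses $a$ and hence its head is exactly $u$ as well (it cannot already be past $u$, since the move that would have taken it past $u$ is the move over $a$, which advances $i$ too), so the move over $a$ is available; iterating reaches $(t,\dots,t)$ and traverses each arc of $A'$ exactly once. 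I expect this synchronization argument — showing that the colors of a shared arc can always be lined up at its tail simultaneously, so that each arc of the union is paid for exactly once — to be the main point requiring care; the rest is routine bookkeeping and a standard DAG shortest-path computation.
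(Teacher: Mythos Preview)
Your proposal is correct and follows essentially the same approach as the paper: both build the product digraph on $V^k$ where a move along an arc $a$ simultaneously advances every coordinate in its color set $\chi(a)=I_a$, and both establish the correspondence by scheduling moves according to the topologically smallest current head. Your write-up is in fact slightly more explicit than the paper's, as you isolate the consistency condition $a\in P_i,\ a\in A_j \Rightarrow a\in P_j$ and argue that each arc of $A'$ is used exactly once (so the $H$-dipath has cost exactly $c(A')$), whereas the paper only states the ``at most'' direction; but the underlying construction and argument are the same.
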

\begin{proof}
Let $D=(V, A)$ be a directed acyclic graph and $s, t\in V$ be two distinct vertices.
We may assume that $s$ is a source and $t$ is a sink by adding a new vertex with degree one to the original terminal vertices if necessary.
For an arc $xy \in A$, let $I_{xy} = \big\{ i \in \{ 1, \ldots, k \} : xy \in A_i \big\}$, that is, $I_{xy}$ contains an index $i \in \{ 1, \ldots, k \}$ if and only if the arc $xy$ is in the color class $A_i$.

We construct a digraph $\widetilde{D} =(\widetilde{V}, \widetilde{A})$ and a cost function $\widetilde{c} \colon \widetilde{A} \to \mathbb{R}$ as follows.
The vertex set $\widetilde{V}$ consists of all $k$ tuples $(v_1, \ldots, v_k)$ of the vertices of $D$.
For an arc $xy\in A$, add an arc from $(u_1, \ldots, u_k)$ to $(v_1, \ldots, v_k)$ in $\widetilde{D}$ if
\begin{enumerate}[label=(\alph*), topsep=1pt, itemsep=-2pt]
 \item $u_i = x$ and $v_i=y$ for all $i\in I_{xy}$, and
 \item $u_j = v_j$ for all $j\notin I_{xy}$,
\end{enumerate}
and let its $\widetilde{c}$-cost be equal to $c(xy)$.
Clearly, the size of $\widetilde{D}$ is polynomial in that of the input graph, as $k$ is considered a constant.
As shown by the following claim, in this auxiliary digraph, the $(s,\ldots,s)$-$(t,\ldots,t)$ dipaths correspond to the feasible solutions of the \ExDirSPMinCost problem.

\begin{claim}\label{clm:exact_dac_constantk}
For any $\gamma \in \R$, there is a solution $A'\subseteq A$ of \ExDirSPMinCost of $c$-cost $\gamma$ if and only if $\widetilde{D}$ contains a dipath $\widetilde{P}$ from $(s, \ldots, s)$ to $(t, \ldots, t)$ of $\widetilde{c}$-cost $\gamma$.
\end{claim}
\begin{proof}
First, let $A'\subseteq A$ be a solution to \ExDirSPMinCost of $c$-cost~$\gamma$.
Now we construct a dipath $\widetilde{P}$ in $\widetilde{D}$ such that each arc of $\widetilde{P}$ corresponds to an arc of $A'$.
Since each arc in $\widetilde{D}$ was added because of exactly one arc in $D$, this implies that the $\widetilde{c}$-cost of $\widetilde{P}$ is at most~$\gamma$.

For each $i \in \{ 1, \ldots, k \}$, let $P_i=A'\cap A_i$.
Let $\ell_i$ be the length of $P_i$ and assume that the vertices of $P_i$ are $v_{i,0}, \ldots, v_{i,\ell_i}$ in this order, where $v_{i,0}=s$ and $v_{i,\ell_i}=t$.
Since $D$ is a directed acyclic graph, we can sort the vertices of $D$ by a topological order.

Assume that we have already constructed a dipath from $(s, \ldots, s)$ to $(v_{1,h_1}, \ldots, v_{k,h_k})$ for some $h_i \in \{ 0, \ldots, \ell_i \}$ for each $i \in \{ 1, \ldots, k \}$.
If $h_i=\ell_i$ for all $i\in\{1, \ldots, k\}$, then we are done.
Otherwise, take an index $i$ for which $v_{i,h_i}$ is the first vertex in the topological order of $D$ among $v_{1,h_1}, \ldots, v_{k,h_k}$.
Let $x=v_{i,h_i}$ and $y=v_{i,h_i + 1}$.
Note that $xy \in A'$.
Extend the already constructed dipath with the arc in $\widetilde{D}$ that goes from $(v_{1,h_1}, \ldots, v_{k,h_k})$ to $(w_1, \ldots, w_k)$, where $w_j=v_{j,h_j + 1}$ if $j\in I_{xy}$, and $w_j = v_{j,h_j}$ otherwise.
Such an arc exists in $\widetilde{D}$ because the special selection of $i$ ensures that $v_{j,h_j} = x$ for each $j \in I_{xy}$, hence both (a) and (b) hold.
This way, we step $v_{i,h_{i}}$ to $v_{i,h_{i}+1}$, and hence we reach the vertex $(t,\ldots,t)$ in at most $|V|^2$ iterations.

\medskip

Second, let $\widetilde{P}$ be an $(s,\ldots,s)$-$(t,\ldots,t)$ dipath of $c$-cost $\gamma$.
We construct a solution $A'\subseteq A$ to \ExDirSPMinCost of $c$-cost $\gamma$.
Let $\ell$ be the length of $\widetilde{P}$ and assume that the vertices of $\widetilde{P}$ are $(v_{1,j},\ldots, v_{k,j})$ for each $j \in \{ 0,\ldots, \ell \}$ in this order, where $(v_{1,0}, \ldots, v_{k,0}) = (s, \ldots, s)$ and $(v_{1,\ell},\ldots, v_{k,\ell}) = (t, \ldots, t)$.
For each $i \in \{ 1, \ldots, k \}$, consider the vertices $v_{i,j}$ for each $j \in \{ 0, \ldots, \ell \}$ in this order taking repeated vertices only once.
Note that for each $i \in \{ 1, \ldots, k \}$ and $j \in \{ 0, \ldots, \ell \}$, if $v_{i,j} \ne v_{i,j+1}$, then there is an arc from $v_{i,j}$ to $v_{i,j+1}$ in $D$ for which $i \in I_{v_{i,j} v_{i,j+1}}$.
Since $D$ is acyclic, this implies that $P_i$ is a dipath from $s$ to $t$ in the color class $A_i$.
Let $A'$ be the union of the dipath $P_i$ for each $i \in \{1, \ldots, k\}$, which is a feasible solution to \ExDirSPMinCost.
\end{proof}

It follows from Claim~\ref{clm:exact_dac_constantk} that we can find a feasible solution to \ExDirSPMinCost of minimum cost by computing the shortest dipath in $\widetilde{D}$ from $(s, \ldots, s)$ to $(t, \ldots, t)$.
Since $\widetilde{D}$ has $O \big( |V|^k \big)$ vertices and $O \big( |V|^k \cdot |A| \big)$ arcs, this can be done in polynomial time when $k$ is a fixed constant.
\end{proof}

Now, we continue with considering the \SupDirSPMinCost problem.

\begin{thm}\label{thm:sup_dac_constantk}
The \SupDirSPMinCost problem can be solved in polynomial time for directed acyclic graphs when $k$ is a constant.
\end{thm}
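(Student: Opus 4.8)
The plan is to mirror the dynamic-programming construction of Theorem~\ref{thm:exact_dac_constantk}, with two changes dictated by the ``superset'' condition: a preprocessing step for arcs of nonpositive cost, and a transition rule in the auxiliary digraph that forces only \emph{some} color classes to use a given arc rather than all of them.

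First I would reduce to the case $c\ge 0$. Since the family of feasible solutions is closed under taking supersets, there is an optimum containing $N:=\{a\in A:c(a)\le 0\}$; hence, replacing $c$ by $\max(c,0)$, solving, and re-adding $N$ (at its original, nonpositive cost) produces an optimum of the original instance. So assume from now on that $c\ge 0$. Then, replacing any feasible $A'$ by the union of one $s$-$t$ dipath per color contained in it does not increase the cost, so some optimum has the form $\bigcup_{i=1}^{k}P_i$ with each $P_i\subseteq A_i$ an $s$-$t$ dipath, and it suffices to minimize $c\big(\bigcup_{i=1}^{k}P_i\big)$ over all such families (arcs lying on several of the $P_i$ being charged once).

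Next I would build an auxiliary digraph $\widetilde D=(\widetilde V,\widetilde A)$ with $\widetilde V=V^{k}$ and a cost function $\widetilde c$. Fixing a topological order of $D$, for every arc $xy\in A$ and every nonempty $S\subseteq I_{xy}$ add an arc of cost $c(xy)$ from $(u_{1},\dots,u_{k})$ to $(v_{1},\dots,v_{k})$ whenever $u_{i}=x$ and $v_{i}=y$ for all $i\in S$, and $u_{j}=v_{j}$ for all $j\notin S$; such a step advances along $xy$ exactly the positions of the colors in $S$, leaving the others put. Every such arc moves at least one coordinate forward in the topological order and none backward, so $\widetilde D$ is again acyclic; it has $O(|V|^{k})$ vertices and $O\big(2^{k}\,|A|\,|V|^{k-1}\big)$ arcs, hence polynomial size for constant $k$. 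I would then prove the analogue of Claim~\ref{clm:exact_dac_constantk}: the minimum $\widetilde c$-cost of a dipath from $(s,\dots,s)$ to $(t,\dots,t)$ in $\widetilde D$ equals the minimum of $c\big(\bigcup_{i=1}^{k}P_i\big)$ above, and an optimal family can be recovered from a shortest such dipath. One inequality is routine: along any $(s,\dots,s)$-$(t,\dots,t)$ dipath of $\widetilde D$ the $i$-th coordinates trace, after removing repetitions (using that $D$ is acyclic), an $s$-$t$ dipath $P_i\subseteq A_i$, and every arc of $\bigcup_{i}P_i$ is the underlying $D$-arc of some step, so $c\big(\bigcup_i P_i\big)$ is at most the $\widetilde c$-cost of the dipath. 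For the converse, given an optimal family $P_1,\dots,P_k$, I would assemble the dipath in rounds: while some path is unfinished, let $i_0$ be one whose current endpoint $x$ is smallest in the topological order, let $xy$ be its next arc, set $S=\{\,j:xy\in P_j\text{ and }P_j\text{ is currently at }x\,\}$, and append the corresponding $\widetilde D$-arc.

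The crux --- the step using acyclicity most heavily --- is to show that this process traverses each arc of $\bigcup_i P_i$ exactly once, so that the resulting dipath has $\widetilde c$-cost exactly $c\big(\bigcup_i P_i\big)$ and not more. The key observation is that in the round in which an arc $xy$ is first bought, every color $j$ with $xy\in P_j$ is currently at $x$: its position cannot precede $x$ in the topological order, since $x$ is the smallest endpoint of an unfinished path; and it cannot lie strictly after $x$, because the only arc of the simple dipath $P_j$ leaving $x$ is $xy$, so a later position would mean $P_j$ had already traversed --- hence already paid for --- $xy$, contradicting that this is the first purchase. Hence $S$ is exactly the set of colors using $xy$, all of them leave $x$ in this round, and (the $P_j$ being simple) no later round revisits $x$ along $xy$. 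Granting this, the two optima coincide, and the algorithm is: set aside $N$, build $\widetilde D$, compute a shortest $(s,\dots,s)$-$(t,\dots,t)$ dipath (in time linear in the size of $\widetilde D$, since it is acyclic), read off $P_1,\dots,P_k$, and output $\bigcup_{i=1}^{k}P_i\cup N$ --- all polynomial for fixed $k$.
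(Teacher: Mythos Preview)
Your proof is correct and follows the same approach as the paper's: both build an auxiliary digraph on $V^{k}$ whose transitions, for each arc $xy\in A$ and each nonempty $S\subseteq I_{xy}$, advance exactly the colors in $S$ along $xy$ at cost $c(xy)$, and both reduce the problem to a shortest $(s,\ldots,s)$-$(t,\ldots,t)$ dipath computation. Your write-up is in fact more thorough than the paper's brief sketch---you spell out the ``each arc is bought exactly once'' argument and add the preprocessing step for nonpositive-cost arcs, which the paper omits but which is needed for the correspondence between shortest dipaths in the auxiliary graph and optimal solutions of \SupDirSPMinCost to hold under arbitrary conservative costs.
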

\begin{proof}
We construct a digraph $\widehat{D} =(\widehat{V}, \widehat{A})$ and a cost function $\widehat{c} \colon \widehat{A} \to \mathbb{R}$ as follows.
Let $I_{xy} = \big\{ i \in \{ 1, \ldots, k \} : xy \in A_i \big\}$.
The vertex set $\widehat{V}$ consists of all $k$ tuples $(v_1, \ldots, v_k)$ of the vertices of $D$.
For an arc $xy\in A$ and for a nonempty subset $J_{xy} \subseteq I_{xy}$, add an arc from $(u_1, \ldots, u_k)$ to $(v_1, \ldots, v_k)$ in $\widehat{D}$ if
\begin{enumerate}[label=(\alph*), topsep=1pt, itemsep=-2pt]
 \item $u_i = x$ and $v_i=y$ for all $i\in J_{xy}$, and
 \item $u_j = v_j$ for all $j\notin J_{xy}$,
\end{enumerate}
and let its $\widehat{c}$-cost be equal to $c(xy)$.
In this auxiliary digraph, the $(s,\ldots,s)$-$(t,\ldots,t)$ dipaths correspond to the feasible solutions of the \SupDirSPMinCost problem.
Analogously to the proof of Theorem~\ref{thm:exact_dac_constantk}, one can argue that the shortest $(s, \ldots, s)$-$(t, \ldots, t)$ dipaths in $\widehat{D}$ naturally correspond to the optimal solutions of \SupDirSPMinCost.
\end{proof}

We next discuss the fixed-parameter tractability of our problems.
That is, we design algorithms with running time bounded by $f(\ell) \cdot \mathrm{poly}(N)$, where $N$ is the size of the input, $\ell$ is a certain parameter, and $f \colon \Z_+ \to \Z_+$ is an arbitrary computable function~\cite[page~5]{Flum2006Parametrized}, where $\Z_+$ denotes the set of nonnegative integers.
Such algorithms are called FPT algorithms.

We first show that the \SupDirSPMinCost and \SupUndSPMinCost problem is fixed-parameter tractable parameterized by the number of multi-colored arcs or edges.

\begin{thm}\label{thm:intersectionFPT_SupDir}
  There exists an FPT algorithm for \SupDirSPMinCost and \SupUndSPMinCost parameterized by the number $\ell$ of multi-colored arcs or edges, respectively.
  The algorithm takes $k \cdot 2^\ell$ executions of a shortest path algorithm in some subgraphs of the input graph with nonnegative cost function.
\end{thm}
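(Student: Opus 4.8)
The plan is to reduce both problems, via $2^\ell$ guesses, to a family of ordinary shortest-path computations that \emph{decouple} across the color classes. The structural observation driving this is that arcs (or edges) belonging to two different single colors never interact: the only shared resources are the $\ell$ multi-colored arcs, and once we have decided which of those to buy, every color class can be completed independently and optimally by one shortest-path computation.

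First I would normalize the cost function. In the superset variants the set of feasible solutions is upward closed (adding arcs never destroys feasibility), so some optimal solution contains every arc of nonpositive cost. Let $N_0 = \{a \in A : c(a) \le 0\}$; we put all of $N_0$ into the solution, pay $c(N_0)$ once, and henceforth treat these arcs as free. After this step every remaining arc has strictly positive cost, so all the auxiliary subgraphs built below carry a nonnegative cost function and contain no negative dicycle — which is exactly what is needed for a standard shortest-path routine. (For \SupUndSPMinCost the costs are nonnegative to begin with, so this step is vacuous.) Let $F$ be the set of multi-colored arcs not in $N_0$; then $|F|\le \ell$.

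Next, for each subset $S\subseteq F$ I would proceed as follows. We commit to buying exactly the arcs of $S$ among the multi-colored arcs, in addition to $N_0$. For each color $i\in\{1,\dots,k\}$ form the subgraph $D_i^S$ on $V$ whose arc set consists of $(A_i\cap(N_0\cup S))$ with cost $0$ together with all single-colored arcs of color $i$ with their original (positive) cost, and compute a shortest $s$–$t$ dipath $Q_i^S$ in $D_i^S$ (or $+\infty$ if none exists). The candidate solution for $S$ is the union of $N_0$, $S$, and the single-colored arcs used by the $Q_i^S$, and its cost is $c(N_0)+c(S)+\sum_{i=1}^k c(Q_i^S)$. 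The algorithm outputs the cheapest candidate over all $S\subseteq F$. This amounts to $k\cdot 2^{|F|}\le k\cdot 2^\ell$ shortest-path calls, each on a subgraph with nonnegative costs, hence an FPT running time in $\ell$; infeasibility is detected exactly when even $S=F$ (for which $D_i^F$ contains all of $A_i$) yields some $Q_i^F=+\infty$.

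For correctness, one direction is immediate: each candidate produced is feasible, since $Q_i^S$ is an $s$–$t$ dipath lying in $A_i$ and contained in the selected arc set. For the converse, take an optimal solution $A^*$, assume (as we may) $A^*\supseteq N_0$, and set $S^*:=A^*\cap F$. For each color $i$, $A^*\cap A_i$ contains an $s$–$t$ dipath $P_i$; since $A^*$ contains no positive multi-colored arc outside $S^*$, every arc of $P_i$ lies in $D_i^{S^*}$, so $c(Q_i^{S^*})$ is at most the cost of $P_i$ within $D_i^{S^*}$, which equals $c(P_i\setminus(N_0\cup S^*))$. Using the partition $A = N_0 \sqcup F \sqcup \bigsqcup_i S_i$, where $S_i$ is the set of positive single-colored arcs of color $i$, and summing these inequalities shows that the candidate for $S^*$ costs at most $c(A^*)$. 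The only real obstacle is making this decoupling/accounting argument airtight — verifying that fixing $A^*\cap F$ renders the $k$ completion problems genuinely independent and that no arc is counted twice across the three parts of the partition; the normalization of nonpositive costs is precisely what makes each per-color shortest-path subproblem well posed. The undirected statement follows by the identical argument with ``dipath'' replaced by ``path''.
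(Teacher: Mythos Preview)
Your proof is correct and follows essentially the same approach as the paper: enumerate all $2^\ell$ subsets of multi-colored arcs, commit to that subset, and then observe that the remaining completion problem decouples across the color classes into independent shortest-path computations. Your bookkeeping is somewhat more explicit than the paper's---you single out the nonpositive-cost arcs into $N_0$ up front and exclude $F\setminus S$ from $D_i^S$, whereas the paper simply zeroes out negative costs and runs shortest paths in all of $D_i$ with the arcs of the guessed subset set to cost~$0$---but the core idea (guessing $A^*\cap F$ makes the per-color subproblems independent) is the same.
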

\begin{proof}
  We only present the proof for digraphs since the same proof works with undirected graphs as well.
  Let $F\subseteq A$ denote the set of multi-colored arcs.

  First we show that negative costs can be replaced with $0$.
  To see this, observe that an optimal solution to the original problem can be obtained by adding the originally negative cost arcs to an optimal solution with respect to this modified cost function.
  Hence, we may assume that the cost function $c$ is nonnegative.

  Let $A^* \subseteq A$ be an optimal solution to the \SupDirSPMinCost problem.
  We show that given $A^* \cap F$, one can extend it to an optimal solution in polynomial time.
  To see this, modify the costs of the arcs in $A^* \cap F$ to $0$, and observe that an optimal solution containing exactly $A^* \cap F$ from $F$ must contain a shortest $s$-$t$ dipath in the digraph $D_i=(V,A_i)$ with respect to the modified cost function.
  If we take the union of a shortest $s$-$t$ dipath in $D_i$ for each $i\in\{1,\ldots,k\}$ and the arcs in $A^* \cap F$, then we obtain an optimal solution to \SupDirSPMinCost.

  It follows from this observation that we can choose the subset $\widetilde{F} \subseteq F$ in every possible way, and modify the cost of the arcs in $\widetilde{F}$ to $0$, and then find a shortest dipath $P_i$ in $D_i$ for each $i\in\{1,\ldots,k\}$ with respect to this modified cost function.
  Then, we take the union of these dipaths, and choose the best (i.e.\ the one with minimum modified cost) among them.
  This gives an optimal solution to \SupDirSPMinCost with respect to $c$.
  This takes $k \cdot 2^\ell$ executions of a shortest path algorithm which we had to show.
\end{proof}

On the other hand, for the \ExUndSPMinCost problem, one can decide whether a feasible solution exists when the number of multi-colored edges is a parameter.

\begin{thm}\label{thm:ExUndSPEx_fpt}
There exists an FPT algorithm for \ExUndSPEx parameterized by $\ell$, where $\ell$ is the number of multi-colored edges.
The algorithm solves $k \cdot \ell! \cdot 4^\ell$ instances of the \textsc{DisjointPaths} problem with at most $\ell$ pairs of terminals in some subgraphs of the input graph.
\end{thm}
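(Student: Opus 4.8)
The plan is to reduce the feasibility question to a bounded number of instances of \textsc{DisjointPaths}, exploiting the fact that two distinct color classes can only share edges among the $\ell$ multi-colored ones. Write $F\subseteq E$ for the set of multi-colored edges, so that $|F|=\ell$, and for $e\in F$ put $\mathrm{col}(e):=\{\,i : e\in E_i\,\}$; every edge of $E\setminus F$ lies in exactly one class. The observation I expect to be the real heart of the argument is structural: if $E'$ is a feasible solution and $P_i:=E'\cap E_i$ is the corresponding simple $s$-$t$ path, then $P_i\cap F=(E'\cap F)\cap E_i$ for every $i$. Indeed, if $e\in E'\cap F$ and $i\in\mathrm{col}(e)$ then $e\in E'\cap E_i=P_i$, while conversely $P_i\cap F\subseteq E'\cap F$ and $P_i\subseteq E_i$. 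Consequently, once the set $\widetilde F:=E'\cap F$ of \emph{active} multi-colored edges is fixed, each $P_i$ is forced to use \emph{exactly} the edges $\widetilde F_i:=\widetilde F\cap E_i$ of $F$, all its remaining edges being mono-colored edges of class $i$; in particular, paths of different colors share no edge outside $\widetilde F$, so once $\widetilde F$ is fixed the color classes decouple completely.

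This suggests the following algorithm. Enumerate all $2^\ell$ candidate subsets $\widetilde F\subseteq F$. For a fixed $\widetilde F$ and a fixed color $i$, the task is to decide whether $G$ has a simple $s$-$t$ path that uses every edge of $\widetilde F_i$, no other edge of $F$, and otherwise only edges of $E_i\setminus F$; the set $\widetilde F$ is feasible precisely when such a path exists for every $i$. To test one color, I would additionally guess the order $e_{i,1},\dots,e_{i,r}$ (with $r=|\widetilde F_i|$) in which the path meets the edges of $\widetilde F_i$ together with, for each of them, which endpoint is entered first, writing $e_{i,j}=(a_j,b_j)$ accordingly; there are at most $|\widetilde F_i|!\cdot 2^{|\widetilde F_i|}$ such guesses. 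After discarding the guesses that are plainly incompatible with a simple path, the remaining task is to find pairwise vertex-disjoint paths $Q_0,Q_1,\dots,Q_r$ realizing the terminal pairs $(s,a_1),(b_1,a_2),\dots,(b_{r-1},a_r),(b_r,t)$ inside the subgraph $(V,E_i\setminus F)$, which is an instance of \textsc{DisjointPaths} with at most $\ell$ pairs of terminals. Since every vertex occurring as one of $s,t,a_1,b_1,\dots,a_r,b_r$ is an endpoint of one of these pairs, vertex-disjointness forces it to be avoided in the interior of every $Q_j$, so the concatenation $Q_0\,e_{i,1}\,Q_1\cdots e_{i,r}\,Q_r$ is a simple $s$-$t$ path using exactly $\widetilde F_i$ from $F$. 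Taking the union of such paths over all colors gives a feasible $E'$ with $E'\cap E_i=P_i$ for every $i$; conversely, for a feasible $E'$ the choice $\widetilde F=E'\cap F$ and the mono-colored segments of each $P_i=E'\cap E_i$ realize one of the enumerated instances, so the algorithm is correct. As \textsc{DisjointPaths} with a bounded number of terminal pairs is fixed-parameter tractable in undirected graphs and everything else is controlled by $\ell$, this is an FPT algorithm for \ExUndSPEx.

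Finally, I would bound the number of \textsc{DisjointPaths} instances. Fix a color $i$ and write $F_i:=F\cap E_i$. Grouping the subsets $\widetilde F\subseteq F$ by the trace $S=\widetilde F\cap F_i$ they induce (there being $2^{\ell-|F_i|}$ of them for each $S\subseteq F_i$), the number of instances produced for color $i$ is
\[
 \sum_{\widetilde F\subseteq F}|\widetilde F\cap E_i|!\;2^{|\widetilde F\cap E_i|}
 \;=\;2^{\ell-|F_i|}\sum_{S\subseteq F_i}|S|!\;2^{|S|}
 \;\le\;2^{\ell-|F_i|}\,|F_i|!\,2^{|F_i|+1}
 \;=\;|F_i|!\,2^{\ell+1}
 \;\le\;\ell!\,2^{\ell+1},
\]
where I used $\sum_{r=0}^{m}\binom{m}{r}r!\,2^r\le m!\sum_{r\ge 0}2^r<m!\,2^{m+1}$. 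Summing over the $k$ colors bounds the total by $k\cdot\ell!\,2^{\ell+1}\le k\cdot\ell!\cdot 4^\ell$, as claimed. The delicate point of the plan is the structural claim that simultaneously pins down $\widetilde F_i$ for all colors once $\widetilde F$ is chosen; without it the per-color guesses would be coupled and the running time would become exponential in $k$ as well. The reduction to \textsc{DisjointPaths} and the bookkeeping of internal vertices should then be routine.
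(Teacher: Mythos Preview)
Your proof is correct and takes essentially the same approach as the paper: guess $\widetilde F\subseteq F$, use the set-algebraic identity $P_i\cap F=(E'\cap F)\cap E_i=\widetilde F_i$ to decouple the colors, then for each color guess the order and orientation in which $P_i$ traverses the edges of $\widetilde F_i$ and reduce to a \textsc{DisjointPaths} instance in $(V,E_i\setminus F)$. The only differences are cosmetic: the paper first groups the edges of $\widetilde F\cap E_i$ into maximal subpaths before guessing their order and orientation (which sidesteps the adjacent-edge corner cases you sweep into ``discarding plainly incompatible guesses''), and in your displayed counting the divergent ``$\sum_{r\ge 0}2^r$'' should read $\sum_{r=0}^{m}2^r=2^{m+1}-1$.
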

\begin{proof}
 Let $F\subseteq E$ denote the set of multi-colored edges.
 Let $E^* \subseteq E$ be a feasible solution to the \ExUndSPEx problem.
 We show that given $E^* \cap F$, one can extend it to a feasible solution by solving at most $k \cdot \ell! \cdot 2^\ell$ instances of \textsc{DisjointPaths} with at most $\ell$ pairs of terminals.
 Observe that $E^* \cap F \cap E_i$ is the union of at most $\ell$ disjoint paths for each $i \in \{ 1, \ldots, k \}$.
 For each $i \in \{ 1, \ldots, k \}$, we do the following.
 Take an order $P_{i,1}, \ldots, P_{i,\ell_i}$ of these paths and choose one of the endpoints $s_{i,j}$ of $P_{i,j}$ for each $j \in \{ 1, \ldots, \ell_i \}$.
 Let $t_{i,j}$ be the other endpoint of $P_{i,j}$.
 Now solve the \textsc{($\ell_i+1$)-DisjointPaths} problem with terminal pairs $(s,s_{i,1}), (t_{i,\ell_i},t)$, and $(t_{i,j}, s_{i,j+1})$ in the subgraph $(V, E_i \setminus F)$ for each $j \in \{ 1, \ldots, \ell_i-1 \}$.
 Then $E^* \cap F$ and the so-obtained paths for each $i \in \{ 1, \ldots, k \}$ together form a feasible solution of \ExUndSPEx.

 We can choose the subset $\widetilde{F} \subseteq F$ in every possible way such that $\widetilde{F} \cap E_i$ is the union of disjoint paths, and apply the procedure described above.
This way, we solve at most $k \cdot \ell! \cdot 4^\ell$ instances of \textsc{DisjointPaths} with at most $\ell$ pairs of terminals.
Clearly, \ExUndSPEx is feasible if and only if any of these instances of \textsc{DisjointPaths} is feasible.

 Since there exists an FPT algorithm for \textsc{DisjointPaths} parameterized by the number of terminal pairs~\cite{Kawarabayashi2012Disjoint}, we obtain an FPT algorithm for \ExUndSPEx parameterized by~$\ell$.
\end{proof}

Observe that Theorem~\ref{thm:ExUndSPEx_fpt} implies that \ExUndSPEx can be solved in polynomial time when $k=2$ and $|E_1\cap E_2|=1$, in contrast to the case of digraphs (cf.\ Theorem~\ref{thm:exact_directed_NPC}).

The algorithm presented in the proof of Theorem~\ref{thm:ExUndSPEx_fpt} can be easily applied to the \ExUndSPMinCost as follows.
In the modified approach, instead of finding arbitrary disjoint paths, we need to find those of minimum total cost, then their total costs must be increased by the cost of the chosen multi-colored edges $\widetilde{F}$, and the output is the cheapest among them.
Since no polynomial-time algorithm is known for the problem of finding disjoint paths of minimum total cost, we do not obtain an FPT algorithm for \ExUndSPMinCost in general.
However, some highly restricted cases of the problem of finding disjoint paths of minimum total cost can be solved in polynomial time, see, for example~\cite{Bjorklund2019Shortest, Borradaile2015Towards, ColinDeVerdiere2008Shortest, Hirai2018Shortest, Kobayashi2010Shortest}.

Now let us consider the case of digraphs.
The algorithm presented in the proof of Theorem~\ref{thm:ExUndSPEx_fpt} can be easily modified for \ExDirSPEx by letting $s_{i,j}$ be the starting point of the dipath $P_{i,j}$ for each $i,j$ (i.e., we follow the orientation of the dipaths instead of considering both their traversals).
Thus, we need to solve $k \cdot \ell! \cdot 2^{\ell}$ instances of the \textsc{DisjointDipaths} problem with at most $\ell$ pairs of terminals.
Although this problem is already NP-complete for two pairs of terminals~\cite{Fortune1980Directed}, there exists an FPT algorithm for planar digraphs parameterized by the number of desired dipaths~\cite{Cygan2013Planar}.
This algorithm combined with the approach described above gives an FPT algorithm for \ExDirSPEx in planar digraphs.

\begin{cor}\label{cor:ExDirSPEx_planar}
For planar digraphs, there exists an FPT algorithm for \ExDirSPEx parameterized by $\ell$, where $\ell$ is the number of multi-colored arcs.
\end{cor}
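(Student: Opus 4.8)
The plan is to imitate the strategy behind Theorem~\ref{thm:ExUndSPEx_fpt}, adapting it to the directed setting and then using, as a black box, the known FPT algorithm for vertex-disjoint dipaths in planar digraphs. Write $F \subseteq A$ for the set of multi-colored arcs, so $|F| = \ell$, and recall that every arc outside $F$ lies in exactly one color class. The key structural observation is that for any feasible solution $A'$ of \ExDirSPEx, the arcs of $A' \cap F$ are the only arcs that the dipaths $A' \cap A_1, \ldots, A' \cap A_k$ can have in common; consequently, once the set $\widetilde F := A' \cap F$ is fixed, the ``private'' parts $(A' \cap A_i) \setminus F$ may be chosen independently across the colors, since they consist of mono-colored arcs.

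Concretely, I would branch over all $2^\ell$ subsets $\widetilde F \subseteq F$, interpreting $\widetilde F$ as the intended value of $A' \cap F$. For a fixed $\widetilde F$ and each color $i$, the set $\widetilde F \cap A_i$ must be the set of $F$-arcs of the $s$-$t$ dipath $A' \cap A_i$, hence must be a vertex-disjoint union of sub-dipaths $P_{i,1}, \ldots, P_{i,\ell_i}$ with $\ell_i \le \ell$ (otherwise this branch is discarded as infeasible). For each color $i$ I would then guess the order $P_{i,\pi(1)}, \ldots, P_{i,\pi(\ell_i)}$ in which these sub-dipaths are traversed along $A' \cap A_i$, which is at most $\ell_i! \le \ell!$ possibilities; letting $s_{i,j}$ be the tail of the first arc and $t_{i,j}$ the head of the last arc of $P_{i,j}$, it remains to find internally vertex-disjoint dipaths connecting $s$ to $s_{i,\pi(1)}$, then $t_{i,\pi(j)}$ to $s_{i,\pi(j+1)}$ for $j = 1, \ldots, \ell_i - 1$, and finally $t_{i,\pi(\ell_i)}$ to $t$, each of them in $(V, A_i \setminus F)$ and also avoiding all internal vertices of the $P_{i,j}$'s (which is enforced by deleting those internal vertices from the host graph first). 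This is an instance of \textsc{DisjointDipaths} with at most $\ell+1$ terminal pairs in a subgraph of $D$, which stays planar after the deletions. A branch succeeds precisely when, for every color $i$, at least one of its $\ell_i!$ ordering-guesses yields a positive \textsc{DisjointDipaths} answer; conversely every feasible $A'$ is detected in the branch $\widetilde F = A' \cap F$ with the ordering induced by $A' \cap A_i$. In total this amounts to at most $k \cdot \ell! \cdot 2^\ell$ instances of \textsc{DisjointDipaths}, each with at most $\ell+1$ terminal pairs.

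For the final step I would invoke the fact that, although \textsc{DisjointDipaths} is NP-complete already for two terminal pairs in general digraphs~\cite{Fortune1980Directed}, on planar digraphs it admits an FPT algorithm parameterized by the number of requested dipaths~\cite{Cygan2013Planar}, say of running time $g(p) \cdot \mathrm{poly}(n)$ for $p$ terminal pairs. Since each of the $O(k \cdot \ell! \cdot 2^\ell)$ instances produced above has at most $\ell+1$ terminal pairs and lives in a planar digraph, the overall procedure runs in time $\big(k \cdot \ell! \cdot 2^\ell\big) \cdot g(\ell+1) \cdot \mathrm{poly}(n)$, i.e.\ $f(\ell) \cdot \mathrm{poly}(N)$ for a suitable computable $f$, as required.

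I expect the main obstacle to be the correctness of the reduction rather than the running-time bookkeeping: one must verify carefully that an arc set assembled from $\widetilde F$ and the disjoint connecting dipaths is genuinely an $s$-$t$ \emph{dipath} in each color class, that is, a simple directed path with no repeated vertices. This is exactly why the connecting dipaths are sought in $(V, A_i \setminus F)$, are forced to avoid every internal vertex of the guessed sub-dipaths, and why the linear order of those sub-dipaths along the target dipath has to be guessed. The remaining care is purely combinatorial: arguing that ``$\widetilde F \cap A_i$ is a vertex-disjoint union of dipaths for every $i$'' is a necessary condition, so branches failing it can be dropped, and that different colors interact only through the already-fixed set $\widetilde F$, so that each color can indeed be processed independently.
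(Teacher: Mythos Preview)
Your proposal is correct and follows essentially the same approach as the paper: branch over subsets $\widetilde F \subseteq F$, for each color guess the linear order of the sub-dipaths induced by $\widetilde F \cap A_i$, reduce to \textsc{DisjointDipaths} in the mono-colored subgraph $(V, A_i \setminus F)$, and invoke the planar FPT algorithm of~\cite{Cygan2013Planar}. Your write-up is in fact more careful than the paper's sketch on two points---the need to delete internal vertices of the guessed sub-dipaths before calling \textsc{DisjointDipaths}, and the terminal-pair count being $\ell+1$ rather than $\ell$---but the underlying argument is the same.
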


Finally, we consider a simple case that can be solved in polynomial time.
A set family on a finite ground set is called \emph{laminar} if, for any two of its intersecting sets, one is contained in the other.
A set family on a finite ground set is called a \emph{chain} if it is laminar and does not contain any disjoint sets.

\begin{thm}\label{thm:laminarfamily}
If the color classes form a laminar family, then \ExDirSPMinCost, \ExUndSPMinCost, \SupDirSPMinCost, and \SupUndSPMinCost are poly\-no\-mial-time solvable.
\end{thm}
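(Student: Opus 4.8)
The plan is to exploit the fact that a laminar family, ordered by inclusion, forms a forest. Since $A = A_1 \cup \dots \cup A_k$, the maximal members $M_1, \dots, M_p$ of $\{A_1,\dots,A_k\}$ are pairwise disjoint and cover $A$, hence partition it; moreover each color class is contained in exactly one $M_j$, so the constraint a color class imposes on $A'$ (a condition on $A' \cap A_i$) as well as the cost $c(A') = \sum_j c(A' \cap M_j)$ decompose accordingly. Thus it suffices to solve each of the $p$ resulting subproblems, i.e.\ to treat the case where the laminar family is a single tree of the inclusion forest. I will also use the elementary observation that if two arc (or edge) sets $P \subseteq P'$ both form an $s$-$t$ (di)path, then $P = P'$: the graph $(V,P')$ is a simple $s$-$t$ (di)path, and the only $s$-$t$ (di)path in $(V,P')$ is $P'$ itself.

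For the two ``superset'' variants I would show that the constraints coming from the non-minimal color classes are redundant: every $A_i$ contains a minimal member $L$ of the family, and if $A' \cap L$ contains an $s$-$t$ (di)path then so does $A' \cap A_i \supseteq A' \cap L$. The minimal members are pairwise incomparable, hence pairwise disjoint, so the problem splits into independent single-class subproblems: for each minimal $L$, minimize $c(A'_L)$ over $A'_L \subseteq L$ whose arc set contains an $s$-$t$ (di)path. For a conservative cost function this is a shortest-(di)path problem --- put every negative-cost arc of $L$ into $A'_L$ for free, replace the remaining costs by $\max(c(\cdot),0)$, and compute a shortest $s$-$t$ (di)path --- so it is polynomially solvable; taking the union of the $A'_L$ together with all negative-cost arcs lying in no minimal class yields an optimal $A'$.

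For the two ``exact'' variants I would first show that the instance is infeasible unless the laminar family is a disjoint union of chains. Indeed, if some class $A_h$ strictly contains two incomparable (hence disjoint) classes $A_i,A_j$, then $A' \cap A_i$ and $A' \cap A_j$ would be two disjoint $s$-$t$ (di)paths, both contained in the single $s$-$t$ (di)path $A' \cap A_h$, which is impossible by the elementary observation. When the family is a disjoint union of chains, say with a chain $A_{i_1} \supsetneq \dots \supsetneq A_{i_r}$, repeated use of the observation gives $A' \cap A_{i_1} = \dots = A' \cap A_{i_r}$, so within the maximal class $A_{i_1}$ of that chain, $A'$ is exactly one $s$-$t$ (di)path that lies inside the minimal class $A_{i_r}$; conversely any such choice, made independently for each chain (whose maximal classes partition $A$), is feasible. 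The cheapest choice per chain is a shortest $s$-$t$ (di)path in the subgraph spanned by its minimal class, computable in polynomial time by Bellman--Ford in the directed conservative case and by Dijkstra in the undirected nonnegative case.

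The only point requiring care is the elementary observation and its two consequences --- that branching in the laminar family is irrelevant for the ``superset'' variants but makes the ``exact'' variants infeasible. Once that is established, every case collapses to polynomially many shortest-(di)path computations, with the minor bookkeeping of free negative-cost arcs in the directed ``superset'' case being the only extra ingredient.
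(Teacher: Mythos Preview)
Your proposal is correct and follows essentially the same approach as the paper: reduce the ``superset'' variants to independent shortest-(di)path problems in the inclusion-wise minimal classes (plus absorbing negative-cost arcs), and show that the ``exact'' variants are infeasible unless the family is a disjoint union of chains, in which case again a shortest (di)path in each minimal class suffices. Your write-up is in fact more detailed than the paper's---in particular, you make explicit the observation that an $s$-$t$ (di)path has no proper sub-$s$-$t$-(di)path, which the paper leaves implicit.
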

\begin{proof}
  In the \ExDirSPMinCost and the \ExUndSPMinCost problems, there exists a feasible solution if and only if the laminar family is a disjoint union of chains and each inclusion-wise minimal member of the family contains a directed or undirected $s$-$t$ path.
  In this case, any optimal solution is the disjoint union of a shortest directed or undirected $s$-$t$ in each inclusion-wise minimal member of the family.

In the \SupDirSPMinCost and the \SupUndSPMinCost problems, there exists a feasible solution if and only if each inclusion-wise minimal member of the family contains a directed or undirected $s$-$t$ path.
In this case, any optimal solution contains all the negative-cost arcs or edges, an arbitrary subset of the zero-cost arcs or edges, and the arcs or edges of such a directed or undirected $s$-$t$ path in each inclusion-wise minimal member of the family which is shortest with respect to the cost function obtained from $c$ by replacing its negative values with 0.
\end{proof}

\section{Acknowledgement}

The authors are grateful to the organizers of the 12th Eml\'ekt\'abla Workshop which took place in G\'ardony, Hungary in 2022.
The research of Naonori Kakimura was supported by JSPS KAKENHI Grant Numbers 23K21646, JP22H05001, Japan and JST ERATO Grant Number JPMJER2301, Japan.
The work of P\'eter Madarasi was supported by the Ministry for Culture and Innovation and Technology of Hungary from the National Research, Development and Innovation Fund, financed under the ELTE TKP 2021-NKTA-62 funding scheme, and by the Ministry of Innovation and Technology NRDI Office within the framework of the Artificial Intelligence National Laboratory Program, and by the Lend\"ulet Programme of the Hungarian Academy of Sciences --- grant number LP2021-1/2021.
P\'eter Madarasi and Kitti Varga were supported by the Ministry of Innovation and Technology of Hungary from the National Research, Development and Innovation Fund --- grant number ADVANCED 150556.
Jannik Matuschke was supported by the special research fund of KU Leuven (project C14/22/026).

\bibliography{bibliography}
\bibliographystyle{plain}

\end{document}